\newcommand{\be}{\begin{eqnarray}}
\newcommand{\ee}{\end{eqnarray}}
\newcommand{\ba}{\begin{eqnarray*}}
\newcommand{\ea}{\end{eqnarray*}}
\definecolor{blue}{rgb}{0,0,1}
\newtheorem{theorem}{Theorem}[section]
\newtheorem{lemma}[theorem]{Lemma}
\newtheorem{remark}[theorem]{Remark}
\newtheorem{proposition}[theorem]{Proposition}
\newenvironment{proof}[1][Proof]{\begin{trivlist}
\item[\hskip \labelsep {\bfseries #1}]}{\end{trivlist}}
\newcommand{\qed}{\nobreak \ifvmode \relax \else
      \ifdim\lastskip<1.5em \hskip-\lastskip
      \hskip1.5em plus0em minus0.5em \fi \nobreak
      \vrule height0.75em width0.5em depth0.25em\fi}
\def\BState{\State\hskip-\ALG@thistlm}
\newcommand{\ep}{\varepsilon}
\newcommand{\E}{\mathbb{E}}
\newcommand{\bitem}{\begin{itemize}}
\newcommand{\eitem}{\end{itemize}}
\newcommand{\benum}{\begin{enumerate}}
\newcommand{\eenum}{\end{enumerate}}
\newcommand{\beq}{\begin{equation}}
\newcommand{\eeq}{\end{equation}}
\newcommand{\beqs}{\begin{equation*}}
\newcommand{\eeqs}{\end{equation*}}
\title{Confidence Intervals for the Number of Components in Factor Analysis and Principal Component Analysis via Subsampling}
\author[1]{Chetkar Jha  \thanks{\texttt{Chetkar.Jha@pennmedicine.upenn.edu}} }
\author[1]{Ian Barnett  \thanks{\texttt{ibarnett@pennmedicine.upenn.edu}} }
\affil[1]{Department of Biostatistics, Epedemiology and Informatics, University of Pennsylvania}
\date{\printdayoff\today} 
\begin{document}
\bibliographystyle{cj}
\maketitle

\begin{abstract}
Factor analysis (FA) and principal component analysis (PCA) are popular statistical methods 
for summarizing and explaining the variability in multivariate datasets. By default, FA and PCA assume the number of components or factors to be known \emph{a priori}. However, in practice the users first estimate the number of factors or components and then perform FA and PCA analyses using the point estimate. Therefore, in practice the users ignore any uncertainty in the point estimate of the number of factors or components. For datasets where the uncertainty in the point estimate is not ignorable, it is prudent to perform FA and PCA analyses for the range of positive integer values in the confidence intervals for the number of factors or components. We address this problem by proposing a subsampling-based data-intensive approach for estimating confidence intervals for the number of components in FA and PCA. We study the coverage probability of the proposed confidence intervals and provide non-asymptotic theoretical guarantees concerning the accuracy of the confidence intervals. 
As a byproduct, we derive the first-order \emph{Edgeworth expansion} for spiked eigenvalues of the sample covariance matrix when 
the data matrix is generated under a factor model. We also demonstrate the usefulness of our approach through numerical simulations and by applying our approach for estimating confidence intervals for the number of factors of the genotyping dataset of the Human Genome Diversity Project.
\end{abstract}



\section{Introduction}
\subsection{Motivation}
Factor analysis (FA) and principal component analysis (PCA) are well-established statistical techniques for
explaining the variability in multivariate datasets. More precisely, for a data matrix $\textbf{X}$ consisting of $p$ features 
collected on $n$ samples, researchers use FA for attributing the dependencies among the $p$ features 
to few latent factors, whereas they use PCA for explaining the variability observed in terms of the first few principal components (see Anderson (1958) pg 569 \cite{anderson1958introduction}). Moreover, FA and PCA are increasingly becoming popular as dimension reduction techniques for large multivariate datasets (see \cite{bai2008factor_review}, \cite{jolliffe2016}, \cite{abraham2014pca}, \cite{tsuyuzaki2020pca_scRNA}). Although FA and PCA continue to remain popular statistical techniques, the concern related to selecting the number of factors or components in FA and PCA remains, see Costello and Osborne (2005) \cite{costello2005}.
For the sake of simplicity, henceforth we use components for denoting both factors and components in FA 
and PCA.

By default, FA and PCA assume the true number of components to be known \emph{a priori}. However, in practice the users
use the point estimate as a substitute for the true number of components and thereby ignore any uncertainty in the point 
estimate. Therefore, for all practical purposes, estimating the number of components is the most important decision 
for effectively performing FA and PCA analyses, (see \cite{costello2005}, \cite{zwick1982}, \cite{zwick1986}, \cite{connor2000}, 
\cite{thompson2004}, \cite{larsen2010},\cite{fan2019factor}).
Zwick and Velicer (1986) \cite{zwick1986} put this in context.
They pointed out that the under-specification of the number of components 
will result in the loss of information of components either by ignoring a component or by combining multiple components.
This loss, due to the under-specification, is rather grave if significant resources were expended to collect a large dataset only to throw
away significant signals during dimension reduction.
On the other hand, the over-specification of the number of components will weaken the estimated signal 
strength by combining the true signal with noise and may result in an incorrect understanding 
of the underlying data generating process. Rather than ignoring the uncertainty in the point estimate, it is 
prudent to quantify the uncertainty in the point estimate using a high confidence interval for the number of components and then perform FA and PCA analyses for all positive integer values in the confidence interval (i.e., the confidence set for the number of components). In this paper, we propose a subsampling-based data-intensive approach for estimating confidence intervals for the
number of components in FA and PCA. 

\subsection{Existing methods for estimating the number of components}\label{existing}The existing methods for estimating the number of components in FA and PCA can be grouped into three categories. The first approach considers the problem of estimating the number of components as a model selection problem. The main idea is that the true model (i.e., the model with the true number of components) will maximize a penalized likelihood function or some other information criteria. Bai and Ng (2002) 
\cite{bai2002factor} proposed an information criterion-based approach for estimating the number of components. 
Later, Bai et al. (2018) \cite{bai2018bic} proved that the methods maximizing Akaike information criteria (AIC) and 
Bayesian information criteria (BIC) are also consistent for estimating the number of components in PCA.

The second approach is motivated by the \emph{scree test} (Cattell (1966) \cite{cattell1966}) method. 
The scree test entails visually inspecting the \emph{scree plot} (i.e., the line plot of ordered eigenvalues of
the sample correlation matrix in the descending order) for an \emph{elbow}. 
The number of components to the left of the elbow in the scree plot
is the estimate of the number of components. Onatski (2009) \cite{onatski2009factor} quantified the 
elbow in the scree plot using the maximum ratio of consecutive eigenvalue gap statistics 
and used it for estimating the number of components.

The third approach estimates the number of components by comparing eigenvalues of the sample covariance 
matrix against an empirical threshold. The number of eigenvalues that are larger than the empirical threshold is
the estimate of the number of components. Horn (1965)'s \cite{horn1965} \emph{parallel analysis} 
prescribed computing the empirical threshold as the mean of eigenvalues of sample correlation 
matrices generated from Monte Carlo (MC) simulations under the null hypothesis. Buja and Eyboglu 
(1992) \cite{buja1992}'s PA estimated the empirical threshold as the mean of the eigenvalues of 
sample correlation matrices of permuted data matrices, where permuted data matrices were
generated by independently permuting column entries of the data matrix. Hong, Sheng, and Dobriban (2020)
\cite{hong2020} extended \emph{parallel analysis} for estimating the number of components in PCA
under heterogenous noise using the \emph{sign flip} method.
Recently, several authors have used the random matrix theory (RMT) for estimating the threshold. 
Dobriban and Owen (2019) \cite{dobriban2019} proposed three different methods, namely: DPA, DDPA, and DDPA+ 
for estimating the number of components that estimated the threshold using the Marchenko-Pastur (MP)
 distribution. Cai, Han, and Pan (2020) \cite{cai2020eigenvalues} estimated the threshold using 
the upper-endpoint of the scale-adjusted Tracy-Widom (TW) high confidence interval.
Ke, Ma, and Lin (2020) \cite{ke2020ci} estimated the threshold using a weighted sum of quantiles of the
Marchenko-Pastur distribution. Fan, Guo, and Zheng (2020) \cite{fan2020factor} estimated the number of components 
by thresholding the unbiased estimators of the largest few eigenvalues of the population correlation matrix
 at the phase-transition threshold of $(1 + \sqrt{\gamma})$, where $\gamma_n \to \gamma$ as $n \to \infty$ and
 the aspect ratio $\gamma_n$ is the ratio of the number $p$ of columns of the data matrix $\textbf{X}$ to the number $n$ of rows of the data matrix $\textbf{X}$.


\subsection{Other related works}

\benum
\item Larsen and Warne \cite{larsen2010} discussed two methods for constructing confidence intervals 
for eigenvalues of the population covariance matrix in exploratory factor analysis. The two methods
use the asymptotic distribution of the eigenvalues in classical low-dimensional asymptotics (i.e., $\gamma_n = p/n \to \gamma= 0$) \citep{anderson1958introduction, anderson1963asymptotic}. As we discuss, these two methods do not 
yield a good approximation of the spiked eigenvalues of the sample covariance matrix (i.e., spiked eigenvalue of the sample covariance matrix is an eigenvalue whose corresponding population eigenvalue is larger than the \emph{phase transition} threshold of $(1 + \sqrt{\gamma}))$ when the underlying data matrix $\textbf{X}$ is generated from a factor model.

\item Recently, Saccenti and Timmerman (2017) \cite{saccenti2017considering} compared \emph{parallel analysis} and the sequential 
TW tests of Johnstone (2001) \cite{johnstone2001distribution} that refit the variance and the number of 
features of the TW distribution to the deflated data. They showed that for the maximum eigenvalue $\lambda_1$
of the sample covariance matrix \emph{parallel analysis} is equivalent to a test based
on the TW distribution. For the further eigenvalues, \emph{parallel analysis}
 can be viewed as relying on the joint distribution of eigenvalues, as opposed to the conditional
 distributions given the previous ones. They observed that under an alternative when the maximum eigenvalue
$\lambda_1$  is small, \emph{parallel analysis} can have a much larger false positive
 rate than the nominal one. We can understand this heuristically as the empirical second largest
 eigenvalue may be larger in this case than under the global null. They argue that the sequential
 TW tests are empirically more accurate, while still being heuristic methods. 

\item Recently, Ke, Ma, and Lin (2020) \cite{ke2020ci} proposed an approach for estimating
confidence intervals for the number of components in FA and PCA using bulk eigenvalue matching analysis.
Their approach rely on smoothing the bulk eigenvalues of the residual covariance matrix using quantiles of the 
Marchenko-Pastur distribution (i.e., eigenvalue matching analysis). In particular, they use the eigenvalue matching analysis and the assumption that the diagonal entries of the residual covariance matrix follow a gamma distribution for estimating confidence intervals for the number of components. As we discuss, we see that the coverage probability of their proposed confidence interval
is not close to the nominal confidence level. We discuss this in the Section \ref{sim}. Moreover, their approach do not provide any theoretical guarantees for the coverage probability of their proposed confidence intervals for finite-sample cases, i.e., non-asymptotic cases.
\eenum

\subsection{Overview}
Let $\textbf{X} = (\textbf{x}_1, \cdots, \textbf{x}_n)^\top$ denote a data matrix comprising of measurements taken on $p$ features for $n$ independent units. Then for the $i^{th}$ unit, the standard linear factor model (see \cite{anderson1958introduction},\cite{dobriban2018dpa}) is given as follows
\be\label{factor}
\textbf{x}_i = \mu + \Lambda\eta_i + \ep_i, 
\ee
where $\mu = (\mu_{1} , \cdots, \mu_{p})^\top$, $\Lambda$ is the
factor loading matrix of dimension $p \times r$, 
$\eta_i = (\eta_{i1}, \cdots, \eta_{ir})^\top$
are the factor scores for the $i^{th}$ sample,
and $\ep_i =(\ep_1,\cdots, \ep_p)^\top$ is the residual vector.
Assume that the data matrix $\textbf{X}$ is centered, then the matrix version of the above
factor model is as follows
\be\label{matrix.factor}
\textbf{X} = Z \Theta^{1/2} \Lambda^\top + \E,
\ee
where $Z$ is the $n \times r$ matrix containing the standardized factor values, 
$\Theta$ is a diagonal matrix with diagonal elements controlling
the scale of variability for each component,
 $\Lambda$ is the loading matrix of dimension $ p \times r$,
and $\E$ is the residual matrix with $\ep_i$ as
the $i^{th}$ row of $\E$. 

Our approach is a subsampling-based data-intensive approach that uses eigenvalues of conditionally independent
submatrices (conditional independent given $\textbf{X}$) to estimate confidence intervals for the number of components.
The detailed approach is as follows. We sample $b$ conditionally independent submatrices (conditional on the data matrix $\textbf{X}$) of the same dimensions that share the aspect ratio with the data matrix $\textbf{X}$, i.e., $\gamma_n = \frac{p}{n}$, where $b$ is a positive integer greater than one. Subsequently, we divide $b$ conditionally independent submatrices into  a set of $(b-1)$ submatrices and the remaining submatrix (henceforth, we will refer to it as the lone submatrix). Assume that the eigenvalues of any Hermitian matrix are always arranged in the decreasing order from the  largest eigenvalue to the smallest eigenvalue. Using the eigenvalues of sample covariance matrices of the $(b-1)$ submatrices, we construct empirical confidence intervals for the (unknown) eigenvalues of the population covariance submatrix at the confidence level $\beta$, where $0.5 < \beta < 1$. Then, our preliminary estimate for the number of components is one less than the first index of the empirical confidence intervals that does not contain corresponding eigenvalue of the sample covariance matrix of the lone submatrix. For example, in the Figure \ref{F:2}, the first index of the eigenvalues of the sample covariance matrix of the lone submatrix that does not lie in the corresponding empirical confidence intervals is second. Therefore, the preliminary estimate is one. The preliminary estimate is comparable to the third approach in Section \ref{existing} in the following manner. Rather than comparing eigenvalues against an empirical threshold (i.e., testing whether eigenvalues lie in a one-sided confidence interval), we use the test whether
eigenvalues lie in two-sided confidence intervals for estimating the number of components.
However, the preliminary estimate is weak because we have discarded a large portion of the
data matrix $\textbf{X}$. For improving the estimate and estimating confidence intervals for the number of components, 
we average the preliminary estimate independently (condtional on the data matrix $\textbf{X}$) across sets of $(b-1)$ submatrices and sets of lone submatrix thus estimating confidence interval for the number of components at the confidence level $\beta$, see Section \ref{methods} for details. For optimizing the performance of our method, we need to appropriately select our hyperparameters, including the choice of $b$ and $\beta$. As we show later, the optimum selection of the hyperparameters $b$ and $\beta$ in 
our approach is connected with the first-order \emph{Edgeworth} expansion of the spiked eigenvalues of
the sample covariance matrix of the lone submatrix.

We briefly discuss our main theoretical result. Consider the setup of spiked PCA, first considered by \cite{johnstone2001distribution}. Suppose the eigenvalues of the population covariance matrix $\Sigma$ of the data matrix $\textbf{X}$ generated under PCA consists of $r$ eigenvalues that are greater than one and $(p-r)$ eigenvalues that are one. For high-dimensional asymptotic regime (i.e., $\gamma_n = p/n \to \gamma \in (0, \infty)$ 
as $n \to \infty$), Baik, Arous, and P\'{e}ch\'{e} (2005) \cite{baik2005phase} proved \emph{phase transition} for eigenvalues of the sample covariance matrix. In particular, they proved that in the high-dimensional asymptotic regime when an eigenvalue $l$ of the population covariance matrix $\Sigma$ is smaller than the phase-transtion threshold $(1 + \sqrt{\gamma})$, the corresponding estimated eigenvalue, after appropriately scaling and centering, converges to the TW distribution with a convergence rate of $n^{2/3}$. Moreover, they proved that in the high-dimensional asymptotic regime when the eigenvalue $l$ is greater than the phase-transition threshold $(1 + \sqrt{\gamma})$, (i.e., a spiked eigenvalue), the corresponding estimated spiked eigenvalue $\hat{l}$, after appropriately centering and scaling, converges in distribution to a normal distribution with a convergence rate of $n^{1/2}$, i.e.,
\ba
n^{1/2} \left(\frac{\hat{l} - \rho(l, \gamma_n)}{\sigma(l, \gamma_n)}\right) \xrightarrow[]{\mathcal{L}} N(0,1),
\ea
where $\mathcal{L}$ denotes converges in distribution, and the centering and scaling parameter has the following analytical form (see \cite{johnstone2018pca,baik2005phase,paul2007asymptotics})
\be\label{parm}
\rho(l, \gamma)  = l + \frac{\gamma l}{(l -1)},\ \sigma^2(l, \gamma) =2 l^2 \left(1 - \frac{\gamma}{(l-1)^2} \right).
\ee

Improving upon this result, Yang and Johnstone \cite{johnstone2018pca} derived a first-order \emph{Edgeworth} correction 
result for the spiked eigenvalues of the sample covariance matrix. Moreover for FA, the above \emph{phase transition} 
also occurs, but the exact analytical form of centering and scaling parameters are difficult to compute in a general case, (see
Theorem 11.11 of \cite{yao2015large} (pg 234-235, Chapter 11)). Recently, Cai, Han, and Pan (2020) showed that, in that case of FA, the $i^{th}$ (spiked) eigenvalue of the sample covariance matrix asymptotically converges to a normal distribution provided the $i^{th}$ eigenvalue $l_i$ of the population covariance matrix $\Sigma$ is larger than the phase transition threshold $(1 + \sqrt{\gamma})$ and $\gamma_n/l_i \to 0$ as $n \to \infty$. 
 
We use a subsampling approach for estimating the centering and scaling parameters.
In particular, we independently sample $M$ (conditional on the data matrix $\textbf{X}$) submatrices of size $\lfloor n/b \rfloor \times \lfloor p/b \rfloor$. Then, we numerically compute $\rho^\star_j$ and $\sigma^\star_j$ as the average and standard deviation of the $j^{th}$ largest eigenvalues of the subsampled covariance matrices (i.e., sample covariance matrices of the subsampled matrices). Then, the centering and scaling parameter is given as follows
\be\label{parm.FA}
\rho_j( \lfloor n/b \rfloor, \gamma_n) = \rho^\star_j , \sigma_j( \lfloor n/b \rfloor, \gamma_n) = \sigma^\star_j.
\ee

As we discuss, we see the centering and scaling parameter in (\ref{parm.FA}) leads to a better density
approximation of spiked eigenvalues of subsampled covariance matrix, see Figure \ref{F:1}. Analytically, we derive a first-order
\emph{Edgeworth expansion} result for the spiked eigenvalues of the sample covariance matrix
of the lone submatrix. The main idea behind the proof is that a substitution covariance matrix
can be written as a sum of two positive semidefinite matrices. We exploit this structure to first solve the
determinant equation and then obtain the first-order \emph{Edgeworth} expansion for the spiked eigenvalues
of the substitution covariance matrix. Thereafter, we use the Delta theorem to extend this result 
to the spiked eigenvalues of the sample covariance matrix.

Our theoretical result guaranteeing the non-asymptotic coverage accuracy of our proposed confidence intervals follow from the following observations. We observe that the preliminary estimate is essentially a convolution of a lone spiked (sample) eigenvalue and the mean of $(b-1)$ conditionally independent spiked (sample) eigenvalues. Also, we have the first-order \emph{Edgeworth} expansion for the spiked eigenvalues of the sample covariance matrix of the lone submatrix. From Theorem 2.2 of Hall (1993) \cite{hall1993}, the mean of $(b-1)$ spiked eigenvalues of sample covariance matrices of $(b-1)$ submatrices 
also has a first-order \emph{Edgeworth} expansion. Then, our proposed confidence interval is obtained by independently averaging over the choice of lone submatrix and a set of $(b-1)$ independent submatrix conditional on $\textbf{X}$. Then, we use the 
\emph{Berry-Esseen} inequality and the convolution of the above-mentioned first-order \emph{Edgeworth
correction} of lone spiked eigenvalue and the mean of $(b-1)$ spiked eigenvalues to get an error bound of the coverage
accuracy of the proposed confidence interval. As a consequence, the hyperparameters are obtained
by minimizing the above error bound. The optimum choice of the hyperparameter $b$ is $\lfloor n^{1/3} \rfloor$, whereas
the optimum hyperparameter $\beta$ is obtained by optimizing the conditional likelihood over a grid of $\beta$ values . 
We also discover that our proposed confidence intervals have greater accuracy for high confidence levels.

\begin{figure}
\centering
\includegraphics[width=5in]{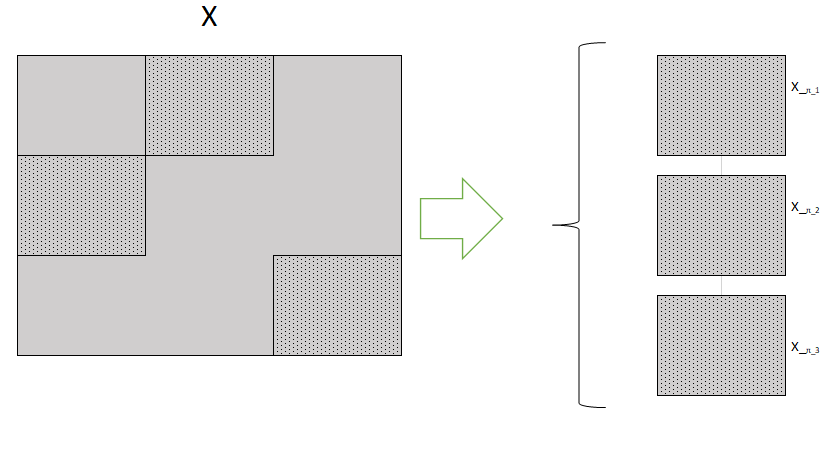}
\caption{Visual representation of the subsampling data matrix. The shaded part on the left denotes the original data matrix $\textbf{X}$ whereas the shaded and dotted part denote submatrices. The submatrices are selected in the manner such that they do not overlap and have the same dimensions while sharing the aspect ratio with the data matrix $\textbf{X}$.}\label{F:0}
\end{figure}

\emph{Organization of the paper}. 
Our contributions are the following: i) We propose a new likelihood-based estimator for estimating the
number of components in FA and PCA. ii) We propose a subsampling-based approach for estimating
confidence intervals for the number of components in FA and PCA. iii) We also derive a non-asymptotic upper bound for 
the coverage accuracy for our proposed confidence intervals. iv) As a byproduct, we
also derive a first-order \emph{Edgeworth} expansion for the spiked eigenvalues of the sample covariance
matrix generated under a factor model. v) Furthermore, we obtain a sharp upper bound on the spectral norm of a cross-product
of two random matrices. The rest of the paper is organized as follows. In Section \ref{methods}, we describe the set up of our problem.
Moreover in Section \ref{methods}, we propose a subsampling algorithm for estimating confidence intervals for the number of components while also proposing a likelihood-based approach for 
estimating the number of components.  In Section \ref{theory}, we derive a \emph{first-order} Edgeworth correction result for the spiked eigenvalues of the sample covariance matrix when the underlying data matrix was generated under a factor model. 
In addition, we derive a non-asymptotic bound for the coverage accuracy for our proposed
confidence intervals of the number of components.
In Section \ref{sim}, we perform numerical evaluation of our approach through various simulation scenarios.
In Section \ref{data}, we demonstrate the usefulness of our approach for carrying out FA and PCA on the genotyping data of Human Genome Diversity Project. Section \ref{disc} ends with the discussion of our proposed approach.
\section{Methods}\label{methods}

\subsection{Notations and Preliminaries}
For any two sequences $\{a_n\}$ and $\{b_n\}$, we say $b_n = O(a_n)$ when there exist 
constants $C$ and $n_0$ such that $|b_n| \le C a_n$ for all $n \ge n_0$.
Likewise, we denote $b_n = o(a_n)$ when for any arbitrarily small $\epsilon > 0$ there exist $n_0$ such that
$|b_n| < \epsilon a_n$ for all $n \ge n_0$. For the rest of the discussion, we assume that
the eigenvalues of any Hermitian matrix are always ordered in the decreasing order
(i.e., arranged from the largest to the smallest). We also denote the greatest integer
function by $\lfloor \cdot \rfloor$. 

The matrix factor model in (\ref{matrix.factor}) is overparametrized and therefore not identifiable. To make the model identifiable, we need to impose condition \ref{a1} on the loading matrix $\Lambda$, see \cite{bai2012factor} for review of identifiability conditions in FA. Moreover, for estimating the model parameters of FA in (\ref{matrix.factor}), we assume that the signal and the error matrices
are independent, see assumption \ref{a2}. Further, we assume that the signal matrix and the error matrix are centered, see
assumption \ref{a3}.
\begin{enumerate}[label= A.\arabic*]
\item\label{a1} (\emph{Identifiability})  Assume that the factor loading matrix $\Lambda$ satisfies $\Lambda^\top \Lambda = I_r$, where $I_r$ is the identity matrix of dimension $r \times r$.
\item\label{a2} (\emph{Independence}) Assume that the signal matrix $Z \Theta^{1/2} \Lambda^\top$ and the noise matrix $\E$ are independent.
\item\label{a3} (\emph{Centering}) Assume that the expected signal and expected noise are zero matrices, i.e., $E(Z \Lambda^\top) = \textbf{0}_{n \times p}$ and $E(\E) = \textbf{0}_{n \times p}$, where $\textbf{0}_{n\times p}$ denotes
a matrix of dimension $n \times p$ with all of its entries as zero. 
\end{enumerate}
Furthermore, we impose additional assumptions (\ref{a4})-(\ref{a6}) on the matrix factor model in (\ref{matrix.factor}) for obtaining
theoretical guarantees for our proposed confidence intervals.
\begin{enumerate}[label = A.\arabic*]
\setcounter{enumi}{3}
\item\label{a4} (\emph{Score Normality}) Assume that the elements of factor score $Z$ are independent realizations from the standard normal distribution, i.e.,
$N(0,1)$.
\item\label{a5} (\emph{Error Normality}) Assume that the elements of noise matrix $\E$ are independent realizations from the standard normal distribution.
\item\label{a6} (\emph{Signal Strength}) Assume that $\Theta = diag(\theta_1, \cdots, \theta_r)$ the diagonal matrix controlling the scale of variability for each factor consists of $\theta_j$ that are at least of the order $O(p^{1 + \delta} )$ for any $j=1, \cdots, r$ and arbitrarily small $\delta > 0$.
\end{enumerate}


Assumption \ref{a4} ensures that all the elements of factor values matrix $Z$ are independent realizations from the standard normal
distribution. Intuitively, assumption \ref{a4} assumes that the component signals are distorted \emph{uniformly} across
the $n$ sample units. Assumption \ref{a5} assumes that the residuals are independent realizations from the standard
normal distribution. Note that the data matrix $\textbf{X}$ can be appropriately centered and scaled to ensure
that the elements of $\E$ (i.e., error terms) in (\ref{matrix.factor}) are normally distributed with mean 
zero and variance one. 
Assumption \ref{a6} prescribes the minimum order of the scale of variability.
As the scale factor of  every component becomes large, the signal components will overwhelm the noise. A common assumption in this context is to assume that the eigenvalues of $Z \Theta Z^\top$
increases with $p$. Bai and Ng (2003)\cite{bai2003} and Fan, Yuan, and Mincheva (2013) \cite{fan2013} estimated the number
of factors under the assumption that the leading eigenvalues of $Z \Theta Z^\top$ increases with $p$. Onatski (2009) \cite{onatski2009} proposed a sequential hypothesis test for estimating the number
of factors when the leading eigenvalues of $Z \Theta Z^\top$ increased with $p^{\theta}$, where $0 < \theta < 1$.
Recently, Cai, Han, and Pan (2020) \cite{cai2020eigenvalues} obtained the asymptotic distribution of the spiked eigenvalues
of the sample covariance matrix under the assumption that $p/(n l_i ) \to 0$, where $l_i$ denote $i^{th}$ spiked eigenvalue of
the population covariance matrix of $\textbf{X}$. In assumption \ref{a6}, we assume that the scale factor, the diagonal elements
of $\Theta$, increased at the rate of $p^{1 + \delta}$ for any arbitrarily small $\delta >0$, i.e., $O(p^{1+ \delta})$. Equivalently, in our setup, this means that the eigenvalues corresponding to the signal matrix are atleast $O(p^{1 + \delta})$.
The rest of the analysis is carried when $\gamma_n = p/n \to \gamma \in (0,1]$. For $\gamma \in [1, \infty)$, we can proceed further by taking the transpose of the data matrix $\textbf{X}$.

\subsection{Likelihood-based estimator} 
We propose a sequential likelihood procedure for estimating the number of components
in the matrix factor model in (\ref{matrix.factor}). We know from Lee and Schnelli (2016)
\cite{leeTWCov16} that the largest eigenvalue of the residual sample covariance matrix ( sample covariance matrix
computed after removing all signals) follows a TW distribution. We use this result to estimate the
number of components in the matrix factor model in (\ref{matrix.factor}). In particular, our estimate of the number of
components is one less than the first index for which TW density
evaluated at the $(r+1)^{th}$ eigenvalue is greater than an arbitrarily small positive number $\delta_0$. An insight into the above observation follows from noting that for spiked eigenvalues that are well above the \emph{phase transition}
threshold of $(1 + \sqrt{\gamma})$, the TW likelihood would approach zero. 
\begin{algorithm}
\caption{Likelihood Based Estimator for the Number of Spikes}\label{alg0}
\begin{algorithmic}[1]
\State Column-wise center and scale the data matrix $\textbf{X}$.
\State Assume the eigenvalues of  the sample covariance matrix of $\textbf{X}$ is given as 
$\hat{l}_1 > \hat{l}_2 > \cdots > \hat{l}_p$.
\State Initialize $\hat{r}=0$.
\State Choose $\delta_0$ to be an arbitrarily small positive number, say $\delta_0 =0.01$.
\State\label{previous.step}If \emph{Tracy-Widom} density evaluated at $\hat{r} + 1$ eigenvalue is less than $\delta_0$ then stop
and return the number of components as $\hat{r}$. Otherwise, go to Step \ref{next.step}.
\State\label{next.step}Increase $\hat{r}$ by one, i.e., $\hat{r} = \hat{r} + 1$ and go to Step \ref{previous.step}.
\end{algorithmic}
\end{algorithm}	

\begin{theorem} Assume that the correlation matrix $\Sigma$ of data matrix $\textbf{X}$ generated from the model in (\ref{matrix.factor}) has exactly $r$ spiked eigenvalues that are greater than $(1 + \sqrt{\gamma})$, where $r$ is an unknown positive integer and the aspect ratio $\gamma$ is the ratio of the number of columns to the number of rows. Moreover, assume that the data matrix $\textbf{X}$ satisfy assumptions (\ref{a1})-(\ref{a6}). Then, we show that the estimate of the number of components in the Algorithm \ref{alg0}
is consistent, as $n \to \infty$, i.e.
\ba
P(\hat{r} = r) \to 1, \text{ as } n \to \infty.
\ea
\end{theorem}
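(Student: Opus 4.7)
The plan is to show that with probability tending to one the algorithm does not terminate at any index $k \le r$, and does terminate at the index $k = r+1$. I interpret ``Tracy--Widom density evaluated at the $(\hat{r}+1)$-th eigenvalue'' as $f_{TW}(\tilde{l}_{\hat{r}+1})$, where $\tilde{l}_k = n^{2/3}(\hat{l}_k - \mu_{np})/\sigma_{np}$ denotes the standardized eigenvalue under the Lee--Schnelli (2016) centering and scaling constants $(\mu_{np}, \sigma_{np})$, so that a $TW_1$ limit applies on the standardized scale.

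\textbf{Step 1 (spiked eigenvalues leave the TW support).} Under the factor model in (\ref{matrix.factor}) with assumptions (\ref{a1})--(\ref{a6}) and the hypothesis that the population correlation matrix $\Sigma$ has exactly $r$ eigenvalues exceeding $1+\sqrt{\gamma}$, the BBP phase-transition results (Baik--Arous--P\'ech\'e and the FA version cited from Cai--Han--Pan) imply that for each $k \le r$,
\[
\hat{l}_k \xrightarrow{P} \rho(l_k,\gamma) = l_k + \frac{\gamma l_k}{l_k - 1} > (1+\sqrt{\gamma})^2,
\]
with the residual $\hat{l}_k - \rho(l_k,\gamma)$ of order $n^{-1/2}$. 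Consequently $\hat{l}_k - \mu_{np}$ is bounded below by a positive constant with probability $\to 1$, and $\tilde{l}_k \xrightarrow{P} +\infty$ at rate $n^{2/3}$. Because $f_{TW}$ decays super-exponentially in its upper tail (of the form $\exp\bigl(-\tfrac{2}{3}x^{3/2}\bigr)$), this forces $f_{TW}(\tilde{l}_k) \xrightarrow{P} 0$. Hence for any fixed $\delta_0 > 0$, a union bound over $k=1,\ldots,r$ yields
\[
P\Bigl(\bigcap_{k=1}^{r}\bigl\{f_{TW}(\tilde{l}_k) < \delta_0\bigr\}\Bigr) \to 1,
\]
i.e.\ the algorithm does not stop prematurely.

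\textbf{Step 2 (edge eigenvalue stays inside the TW support).} Using Lee--Schnelli's result on the largest eigenvalue of the deflated/residual sample covariance matrix, the first non-spiked eigenvalue satisfies $\tilde{l}_{r+1} \xrightarrow{d} W \sim TW_1$. Since $f_{TW}$ is continuous and strictly positive on $\R$, the continuous mapping theorem gives $f_{TW}(\tilde{l}_{r+1}) \xrightarrow{d} f_{TW}(W) > 0$ almost surely. Therefore $P\bigl(f_{TW}(W) \ge \delta_0\bigr) \uparrow 1$ as $\delta_0 \downarrow 0$, so for the ``arbitrarily small'' $\delta_0$ permitted in Step~4 of the algorithm,
\[
P\bigl(f_{TW}(\tilde{l}_{r+1}) \ge \delta_0\bigr) \to 1.
\]
Combining Steps 1 and 2 via one more union bound delivers $P(\hat{r} = r) \to 1$.

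\textbf{Main obstacle.} The delicate point is the correlation-matrix normalization in Step~1 of the algorithm: the population correlation matrix rescales each coordinate by the marginal standard deviation $\sqrt{(\Lambda\Theta\Lambda^\top + I)_{ii}}$, which depends on the loadings and the potentially large $\theta_j = O(p^{1+\delta})$ from assumption (\ref{a6}). One must verify that after this rescaling the $r$ spiked population eigenvalues of $\Sigma$ are still strictly greater than $1+\sqrt{\gamma}$ (which is granted by the theorem hypothesis) and that the Lee--Schnelli TW limit for the deflated bulk survives the correlation normalization; under (\ref{a5})--(\ref{a6}) the noise variances are $1 + O(p^{-\delta})$, so the correlation and covariance bulks agree asymptotically, giving the required TW limit. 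A secondary subtlety is the interplay between $n$ and $\delta_0$: the argument is cleanest when $\delta_0$ is fixed but sufficiently small, or alternatively $\delta_0 = \delta_0(n) \downarrow 0$ slowly enough that it dominates the super-exponential upper tail bound used in Step~1.
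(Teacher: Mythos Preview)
Your two–step argument---showing that each standardized spiked eigenvalue $\tilde l_k$ diverges to $+\infty$ on the $n^{2/3}$ scale (driving the Tracy--Widom density to zero via its super-exponential right tail), while $\tilde l_{r+1}$ converges weakly to $TW_1$ by Lee--Schnelli so that $f_{TW}(\tilde l_{r+1})$ stays positive---is exactly the route the paper sets up through its citations of the BBP phase transition, Cai--Han--Pan, and Lee--Schnelli; the paper defers its own argument to an appendix not reproduced in the source, but the ingredients it assembles leave no room for a materially different proof.

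One sharpening: the $\delta_0$ point you label a ``secondary subtlety'' is in fact the crux, not a side issue. With $\delta_0$ held fixed (as the algorithm's default $\delta_0=0.01$ suggests), Step~2 only yields
\[
\lim_{n\to\infty} P\bigl(f_{TW}(\tilde l_{r+1}) \ge \delta_0\bigr) \;=\; P\bigl(f_{TW}(W)\ge \delta_0\bigr) \;<\; 1,
\]
since $f_{TW}$ is strictly positive but not bounded below on $\R$; hence $P(\hat r = r)\not\to 1$. Strict consistency therefore \emph{requires} $\delta_0=\delta_0(n)\downarrow 0$, and your Step~1 survives this because $f_{TW}(\tilde l_k)\le \exp(-c\,n)$ with probability tending to one, so any polynomially vanishing $\delta_0(n)$ works. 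You should state this as a condition, not a caveat; otherwise the conclusion as written does not follow. Your observation about the correlation normalization is well taken and is handled exactly as you indicate: under (\ref{a5})--(\ref{a6}) the diagonal rescaling is $1+o(1)$, so the bulk edge and the TW fluctuation for $\hat l_{r+1}$ are unaffected asymptotically.
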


\begin{proof}
See the Appendix.
\end{proof}

The likelihood-based estimator in Algorithm \ref{alg0} in comparison to the other methods, such as PA, is more
sensitive to the normality assumption in (\ref{a4})-(\ref{a5}). Moreover, the likelihood-based estimator assumes
that the spiked eigenvalues to be large, see assumption (\ref{a6}).
As we discuss, the usefulness of the likelihood-based estimator lies in correctly pegging (or centering) our proposed confidence intervals
for the number of components around a consistent estimator for the number of components. 

\subsection{Subsampling algorithm} 
We propose a subsampling-based approach for estimating confidence intervals for the number of components. Our algorithm is described as follows. Consider a data matrix $\textbf{X}$ generated from the factor model in (\ref{matrix.factor}).
We partition the data matrix $\textbf{X}$ into a lone submatrix and a set of $(b-1)$ submatrices, i.e., $\big\{\textbf{X}_{\star}, \{ \textbf{X}_{\pi_i}\}_{i=1}^{b-1} \big\}$ such that all $b$ submatrices are conditionally independent given $\textbf{X}$ and are of same dimensions with the common aspect ratio as that of $\textbf{X}$, i.e., $\gamma_n = p/n$. The rationale for choosing the submatrices of the same dimensions is to make the singular values of all the partitions comparable.
Since we know from Theorem 11.11 of \cite{yao2015large} (pg 234-235, Chapter 11) that the distribution of eigenvalue of any sample
covariance matrix is parametrized in its aspect ratio, therefore we fix the aspect ratio of the above submatrices as that of the aspect ratio of $\textbf{X}$. We use eigenvalues of $(b-1)$ subsampled covariance matrices, (i.e., $\{ \frac{\textbf{X}_{\pi_i} \textbf{X}_{\pi_i}^\top}{\lfloor n/b \rfloor} \}_{i=1}^{b-1}$) to construct empirical confidence intervals for eigenvalues of the unknown population covariance submatrix corresponding to the subsampled covariance matrices at the confidence level $\beta$, where $0 < \beta < 1$. 
Then, using the eigenvalues of the lone subsampled covariance matrix $\frac{\textbf{X}_{\star}\textbf{X}_{\star}^\top}{\lfloor n/b \rfloor}$, we obtain our preliminary point estimate as one less than the first index of the empirical confidence intervals that does not
contain the corresponding eigenvalues of $\frac{\textbf{X}_{\star}\textbf{X}_{\star}^\top}{\lfloor n/b \rfloor}$, see Figure \ref{F:2}.
This approach can be likened to the third approach in the existing literature (Section \ref{existing}).
Recall that the third approach entails the step of comparing eigenvalues of the sample covariance matrix against an empirical threshold that can be interpreted as testing whether eigenvalues lie in a one-sided confidence interval. In comparison, our approach entails
testing whether eigenvalues of the sample covariance matrix of lone submatrix lie in a two-sided confidence interval.

We trim the confidence intervals by $\epsilon_0/\lfloor n/b \rfloor$ on both ends. For the fixed $\beta$, the trimming of
the confidence interval would tend to decrease the preliminary estimate. The trimming constant is scaled by $\lfloor n/b \rfloor$ for keeping the trimming constant comparable with the eigenvalues of the subsampled covariance matrix. The default choice of $\epsilon_0$ is $0.02$ is rather arbitrary and one could select any arbitrarily small positive number. However, choosing $\epsilon_0$ as a large number is not desirable as it might significantly reduce the preliminary estimate to possibly zero. The trimming of the confidence interval does not affect the theory of our approach because it is equivalent
to selecting a lower value of confidence level $\beta$. However, in practice, trimming gives a slightly improved
performance by helping us effectively choose the optimum $\beta$ by refining the grid values of $\{ \beta^{(i)} \}_{i=1}^g$ as a function of the size of the subsample matrix.

For further improving the preliminary point estimate, we generate $M$ random sets where each set consist of a lone submatrix and a set of $(b-1)$ submatrices of equal dimensions with the common aspect ratio $\gamma_n$ such that all $b$ submatrices are conditionally independent given $\textbf{X}$, i.e., $\{ \textbf{X}^{j}_{\star}, \{ \textbf{X}^j_{\pi_i}\}_{i=1}^{b-1} \}_{j=1}^M$.
Averaging the preliminary point estimate over $M$ sets gives us a much-improved
point estimate, say $\hat{r}^{\beta}_1$. For estimating confidence intervals for the number of components, we sample $K$ sets of $(b-1)$ conditionally independent submatrices for every lone submatrix $\textbf{X}^j_{\star}$ 
such that $K$ sets of $(b-1)$ subsampled matrices and the lone submatrix are conditionally independent given $\textbf{X}$, i.e.,
$\{ \textbf{X}^{j}_{\star}, \{ \textbf{X}^{jk}_{\pi_i}\}_{i=1}^{b-1} \}_{j=1, \cdots, M , k = 1, \cdots, K}$. 
Averaging the preliminary point estimate across $M$ sets for $K$ sets of $(b-1)$ conditionally independent submatrices yield $\{ \hat{r}^{\beta}_j \}_{j=1}^{K}$. By construction $\{ \hat{r}^{\beta}_j \}_{j=1}^K$ are $K$
independent random variables conditional on $\textbf{X}$. Therefore, by applying the normal
theory of confidence intervals on $\{ \hat{r}^{\beta} \}_{j=1}^K$, we construct a two-sided confidence interval for the number of components at the confidence level $(1 -\alpha)$, where $(1-\alpha)$ is the user-specified confidence level. The algorithm is summarised in algorithm \ref{alg1}.

It follows from the above discussion that our proposed confidence intervals depend on hyperparameters, namely: $b$, $\beta$, $K$, and $M$ besides the user-specified confidence level $(1-\alpha)$. For instance, a high value of $b$ means that we have smaller submatrices for accurately estimating the spiked eigenvalues and a low value of $b$ means that we have less number of submatrices for accurately estimating centering and scaling parameters of the normal distribution of the spiked eigenvalues. Similarly, a relatively high (low) value of $\beta$ could lead to an overestimation (underestimation) in the estimate of the number of components. As we discuss later, the choice of $M$ and $K$ also affects the performance of our proposed confidence interval.
The hyperparameters in the above subsampling procedure need to be optimized for the optimum coverage probability of the
confidence interval. This is typically the case in subsampling literature where hyperparameters are optimized for improving
the performance of the subsampling algorithm, see \cite{politis2001asymptoticsubsampling}, \cite{wang2018subsampling}. 
We discover that the optimum value of $b$ is $\lfloor n^{1/3} \rfloor$, but we defer that discussion until we develop the 
relevant theory, see Section \ref{theory}. Likewise, the suitable choice of hyperparameters $K$ and $M$ is also deferred to Section \ref{theory}.

\subsection{Treating $\beta$ as an unknown hyperparameter}
The difficulty with the above subsampling algorithm lies in interpreting $\beta$. We treat $\beta$ as an 
unknown fixed hyperparameter that should be optimally selected in a data-intensive manner to improve the performance of
our proposed confidence intervals. In particular, let $\{ \beta^{(i)} \}_{i=1}^g$ vary 
over a grid of $g$ values between $0.5$ and $1$. Moreover, let $\{ \{ \hat{r}^{(i)}_j \}_{i=1, \cdots, g} \}_{j=1, \cdots, K}$ denote
$K$ sets of conditionally independent random variables (conditional on $\textbf{X}$) corresponding to $g$ grid 
values of $\{ \beta^{i} \}_{i=1}^g$.
Using the central limit theorem, for any fixed $i$, the mean of $\{ \hat{r}^{(i)}_j \}_{j=1}^K$ (conditional on $\textbf{X}$)
would be approximately normal. We use this insight to find the optimum value of $\beta$. Assume that $M$
and $K$ are moderately large and fixed.
Then, the optimum $\beta$ is the one that centers the realizations $\{ \hat{r}^{(i)}_j \}_{j=1}^K$ 
around one more than the estimate of the number of components, $r_0 + 1$.
The rationale for centering $\{ \hat{r}^{(i)}_j \}_{j=1}^K$ around $r_0 + 1$ and not $r_0$ is to facilitate our algorithm for computing
confidence intervals for all non-negative integer values of the number of components $r$ including $r$=0.
Let $r_0$ be the estimate of $r$ from the likelihood approach or any other consistent estimator of $r$. Then, we select the optimum $\beta$ that centers the realizations $\{ \hat{r}^{(i)}_j \}_{j=1}^K$ around $r_0 + 1$ while maximizing the following likelihood.
\be\label{opt}
 t\left( \frac{\bar{r}^{(l)} - r_0 - 1 }{ \omega^{(l)} } ; 1, 0\right),
\ee
where $\bar{r}^{(l)} = \sum_{j=1}^K\frac{\hat{r}^{(l)}_j}{K}$, $\omega^{(l)}$ is the standard deviation
of $\{ \hat{r}^{(l)}_j \}_{j=1}^K$, and $t(\cdot; 1, 0)$ denotes  Student's t-distribution with mean zero
and one degree of freedom. 

Using the optimum $\beta^{(l)}$ and $\{ \hat{r}^{(l)}_j\}_{j=1}^K$, we can construct a $(1 - \alpha)\%$ confidence interval
for one more than the true number of components. Substracting one from the lower and upper bound gives us the $(1 - \alpha)\%$
confidence interval for the true number of components, see algorithm \ref{alg1}.



\begin{algorithm}
\caption{Subsampling Algorithm}\label{alg1}
\begin{algorithmic}[1]
\State Input: A matrix $\textbf{X}$ with $n$ rows and $p$ columns, and user-specified significance level $\alpha > 0$.
\State Let the default value of $\epsilon_0 =0.02$.
\State Compute $r_0$ using algorithm \ref{alg0} or using the \emph{deterministic parallel algorithm} proposed by \cite{dobriban2018dpa}.
\State Consider a grid of confidence levels in $[0.5, 1]$ as $\{ \beta^{(l)}\}_{l=1}^g$.
\State Let $b=\lfloor n^{1/3} \rfloor$, and then compute $p_{\star} = \lfloor p/b \rfloor$, $n_{\star} = \lfloor n/b \rfloor$.
\State Randomly select a submatrix $\textbf{X}_{\star}$ of $\textbf{X}$ with dimension $n_{\star}$ rows and  $p_{\star}$ columns.
\State Compute the SVD of  sample covariance matrix $\frac{ \textbf{X}_{\star} \textbf{X}_{\star}^\top }{n_{\star}} = U D_{\star} V^\top$.
\State Store sample	 singular values of $D_{\star}$ in $l^{\star}_1 \ge \cdots \ge l^{\star}_{p_{\star}}$. 
\BState Compute $b-1$ conditionally independent submatrices of equal dimensions shown as follows
\begin{enumerate}[label = \roman*]
\State Denote $(b-1)$ independent subsetted data matrix by $\{ \textbf{X}_{\pi_i} \}_{i=1}^{b-1}$, which are also independent of $\textbf{X}_{\star}$. Let the corresponding $(b-1)$  sample covariance matrices be given by $\left\{ \frac{ \textbf{X}_{\pi_i} \textbf{X}_{\pi_i}^\top  }{n_{\star}} \right\}_{i=1}^{b-1}$,
\State For each $i=1, \cdots, b-1$, compute the SVD of $\frac{ \textbf{X}_{\pi_i} \textbf{X}_{\pi_i}^\top }{n_{\star}} = R_i Q_i S^\top_i$,
\State Store singular values of $Q_i$ in $l^{\pi_i}_1 \ge \cdots \ge l^{\pi_i}_j \ge \cdots l^{\pi_i}_{p_{\star}}$, where $i=1, \cdots, b-1$.
\end{enumerate}
\State Compute $\mu_{j} = \frac{\sum_{i=1}^{b-1} l^{\pi_i}_j}{b-1}$ and $\sigma_j = \sqrt{ \frac{\sum_{i=1}^{b-1}( l^{\pi_i}_j - \mu_j)^2}{b-1}}$.
\BState\label{step2} For $i=1:g$,
\begin{enumerate}[label =\roman*]
\State At $\beta^{(i)}$, define confidence interval $CI^{(i)}_{j}$ as $[\mu_j +z_{\beta^{(i)}/2} \sigma_j + \epsilon_0/n_{\star} , \mu_j + z_{1 - \beta^{(i)}/2} \sigma_j -\epsilon_0/n_{\star}]$,
\State Accept $r^{(i)} =
\max_{ \{j : j \in \{1, \cdots,  p_{\star} \} \}} I\{ \{ l^\star_{s} \in CI^{(i)}_{s} \text{ for all } s \in \{1, \cdots, j \} \} \cap
\{ l^{\star}_{j+1} \notin CI^{(i)}_{j} \} \} $,
\end{enumerate}
\State Repeat Steps (2)-(12) $M$ times to get $M$ independent realizations of $\{ r^{(i)}\}$.
\State Repeat Steps (5)-(13) $K$ times and denote the estimate of the number of spikes by $\{ r^{(i)}_{k,i}\}_{k=1, \cdots, K , i=1, \cdots, M}$.
\State Compute $\hat{r}^{(i)}_k = \frac{\sum_{j=1}^M  r^{(i)}_{k,j}}{M}, \hat{r}^{(i)} = \frac{\sum_{k=1}^K \hat{r}^{(i)}_k}{K-1}$, and $\hat{\sigma}^{(i)}_{r} = \sqrt{\frac{\sum_{k=1}^K(\hat{r}^{(i)}_k - \hat{r}^{(i)})^2}{K}}$.
\State Select $i^{\star}$ as $i$ maximizing the following objective function
\be
i^{\star} = \max_{i : i \in \{1, \cdots, g \}}  t( \frac{\bar{r}^{(i)} - r_0 - 1 }{ \omega^{(i)} }),
\ee
where $\bar{r}^{(i)} = \sum_{j=1}^K\frac{\hat{r}^{(i)}_j}{K}$, $\omega^{(i)}$ is the standard deviation
of $\{ \hat{r}^{(i)}_j \}_{j=1}^K$, $t(\cdot; 1, 0)$ denotes Student's t- distribution with one degree of freedom and mean zero.
\State Using the normal theory of confidence intervals and adjusting for a bias of $one$, the two sided $(1- \alpha)$ confidence interval for $r$ is given as
$[\lfloor \hat{r}^{(i\star)} + z_{\alpha/2} \hat{\sigma}^{i^\star}_r \rfloor - 1,  \lfloor \hat{r}^{(i\star)} + z_{1-\alpha/2} \hat{\sigma}^{i^\star}_r \rfloor - 1]$, where $\lfloor \cdot \rfloor$ is the greatest integer function.
\end{algorithmic} 
\end{algorithm}

\section{Theoretical results}\label{theory}
The performance of a confidence interval is evaluated by the closeness of 
the coverage probability (i.e., for a large number of independent experiments, the percentage of times
the computed confidence interval covers the true parameter) of a confidence interval to its specified confidence level. The closer is the coverage probability to its specified confidence level, the better is the performance of the confidence interval
at that specified confidence level. For providing theoretical guarantee for our proposed confidence intervals, 
we derive a non-asymptotic upper bound for the absolute deviation of the coverage probability from the specified confidence level. 
Using this bound we also establish that the coverage probability of our proposed confidence intervals is consistent. The above theoretical guarantee rely on deriving the first-order \emph{Edgeworth correction} of spiked eigenvalues of the sample covariance matrix when the underlying data was generated from a factor model in (\ref{matrix.factor}).

\subsection{First-order Edgeworth correction of the spiked eigenvalues of substituted covariance matrix}
Recently, Yang and Johnstone (2018) \cite{yang2018edgeworth} proposed a first-order \emph{Edgeworth
correction} for the largest  eigenvalue of the sample covariance matrix when the underlying data matrix $\textbf{X}$
was generated under a principal component model. Unfortunately, their approach cannot be automatically generalized to 
the largest few eigenvalues of the sample covariance matrix generated under a factor model in (\ref{matrix.factor}) for the following reasons. First, their approach cannot be used for solving the determinant equation of eigenvalues of the sample covariance matrix generated under a factor model. Second, since the loadings in FA are distributed across $p$ factors, it is difficult to estimate the analytical form of the population mean and population standard deviation of spiked eigenvalues of the sample covariance matrix, see (\ref{parm}). 

We overcome the above two problems as follows. For the first problem,
we solve the determinant-based equation for a substitution covariance matrix which is positive semidefinite.
Then, borrowing ideas from Yang and Johnstone (2018) \cite{yang2018edgeworth}, we derive a first-order \emph{Edgeworth} correction for the spiked eigenvalues of the substitution covariance matrix. Subsequently, we use the Delta theorem and the upper bound of the spectral norm of the cross product of two random matrices to get a first-order \emph{Edgeworth correction} for the spiked eigenvalues of the sample covariance matrix. For the second problem, we use a subsampling strategy for estimating the centering $\{ \rho_i \}_{i=1}^r$ and scaling parameters $\{ \sigma_i \}_{i=1}^r$ for top $r$ spiked eigenvalues of the sample covariance matrix. Essentially, we repeatedly subsample a submatrix with the same aspect ratio $\gamma_n$ (as that for data matrix $\textbf{X}$) conditional on the data matrix $\textbf{X}$ for empirically estimating the centering $\{ \rho_i\}_{i=1}^r$ and the scaling $\{ \sigma_i\}_{i=1}^r$ parameters for $r$ spiked eigenvalues. Theoretically speaking, the centering and the scaling parameter also depend on the empirical distribution of the $p$ sample eigenvalues of the residual sample covariance matrix $n^{-1} \E \E^\top$.
Assume that the $F_{n}$ denotes the empirical eigenvalue distribution of the $p$ sample eigenvalues of $n^{-1} \E^\top
\E$. As $n$ grows $F_{n}$ converges to the Marchenko-Pastur distribution $F_{\gamma}$ supported
on $[a(\gamma), b(\gamma)]$ when $\gamma \le 1$ and with a point mass of $(1 - \gamma^{-1})$ at zero when $\gamma > 1$,
where $a(\gamma) = (1 - \sqrt{\gamma})^2, b(\gamma) = (1 + \sqrt{\gamma})^2$. The \emph{companion}
empirical distribution of the $n$ eigenvalues of $n^{-1} \E \E^\top$ converges to the companion MP law
$F_{\gamma} = (1- \gamma) I_{[0, \infty)} + \gamma F_{\gamma}.$ In fact, as shown in \cite{johnstone2018pca}
the skewness component is given in terms of the above-defined $F_{\gamma}(\cdot)$.

\begin{lemma}\label{edgeworth.sub}
Assume that the data matrix $\textbf{X}$ generated under the factor model in (\ref{matrix.factor}) satisfy assumptions (\ref{a1})-(\ref{a6}). Let the singular value decomposition of the population covariance matrix $\Sigma$ of the data matrix $\textbf{X}$ be given by $V M V^\top$, where the diagonal matrix $M$ is given as below
\be
M = \begin{bmatrix} L & \textbf{0}^\top_{r \times n-r} \\
\textbf{0}_{r \times n-r} &  I_{n-r \times n-r}
\end{bmatrix},
\ee
 with $L = diag(l_1, \cdots, l_r)$, $l_1 > \cdots > l_r > (1 + \sqrt{\gamma})$, $\textbf{0}_{r\times n-r}$ denoting the matrix of dimension $r\times (n-r)$ with all zero elements, $V$ denoting a unitary matrix of dimension $n \times n$, and $\gamma_n = p/n \to \gamma$.

For $H, \Theta, \E$ defined in (\ref{matrix.factor}), define the substitution covariance
matrix $\Sigma_s$ as below
\be\label{sub.Sigma}
\Sigma_s = \frac{H \Theta H^\top}{n} + \frac{\E \E^\top}{n}.
\ee 

Let $\hat{l}^s_i$ be the $i^{th}$ largest eigenvalue of the substitution covariance matrix $\Sigma_s$ given in (\ref{sub.Sigma}).
Let $\rho_i, \sigma_i$ be the centering and scaling parameter given in (\ref{parm.FA}).
Then, for any $i \in \{1, \cdots, r \}$, the cumulative distribution function of $\mathcal{T}^{s}_{n,i} = n^{1/2}( \hat{l}^s_i - \rho_i )/\sigma_i$  has the following first-order Edgeworth expansion 
\be
P( \mathcal{T}^{s}_{n,i} \le x) = \Phi(x) + n^{-1/2} p_{1,n}(x) \phi(x) + o(n^{-1/2}), \text{ valid uniformly in x},
\ee
where $Y_n =o(1)$ denotes that $\lim_{n\to \infty} P(|Y_n| \ge \epsilon) = 0$, and 

\be
p_{1,n}(x) = \frac{1}{6} \kappa^{-3/2}_{2,n} \kappa_{3,n} (1- x^2) - \kappa^{-1/2}_{2,n} \mu(g_{(n,j)}),
\ee
for $g_{(n,j)}(\lambda) = (\rho_j - \lambda)^{-1}$, $\kappa_{2,n} = 2 F_{\gamma_n}(g_{(n,j)}^2)$, $\kappa_{3,n} = 8 F_{\gamma_n}(g_{(n,j)}^3)$, $\mu(\cdot)$ is the asymptotic mean in the central limit theorem of Bai-Silverstein limit \citep{bai2004clt}.
\end{lemma}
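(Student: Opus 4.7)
My plan is to exploit the additive decomposition $\Sigma_s = A_n + n^{-1} Z \Theta Z^\top$, where $A_n = n^{-1}\E\E^\top$ is the noise Wishart-type matrix and $Z$ (the $n\times r$ matrix denoted $H$ in the lemma statement) is the factor score matrix, independent of $\E$ with i.i.d.\ $N(0,1)$ entries by (\ref{a2}) and (\ref{a4}). For any $\lambda$ outside the bulk spectrum of $A_n$, the determinant factorization
\begin{equation*}
\det(\lambda I_n - \Sigma_s) = \det(\lambda I_n - A_n)\,\det\!\bigl(I_n - (\lambda I_n - A_n)^{-1}\, n^{-1}Z\Theta Z^\top\bigr)
\end{equation*}
combined with Sylvester's identity reduces the spike equation to the rank-$r$ equation
\begin{equation*}
\det\!\Bigl(I_r - \Theta^{1/2}\, \tfrac{Z^\top(\lambda I_n - A_n)^{-1}Z}{n}\,\Theta^{1/2}\Bigr) = 0,
\end{equation*}
so that each spiked eigenvalue $\hat l^s_j$ becomes an eigenvalue-$1$ root of an $r\times r$ random matrix depending smoothly on $\lambda$.

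The second step is to control the random matrix $n^{-1}Z^\top (\lambda I_n - A_n)^{-1}Z$. Since $Z$ is independent of $A_n$ with i.i.d.\ standard normal entries, Hanson--Wright-type concentration yields
\begin{equation*}
n^{-1} Z^\top (\lambda I_n - A_n)^{-1} Z = \hat m_n(\lambda)\, I_r + O_p(n^{-1/2}),
\end{equation*}
where $\hat m_n(\lambda)=n^{-1}\mathrm{tr}(\lambda I_n - A_n)^{-1}$ is the Stieltjes transform of the companion empirical distribution and converges to $m_{\gamma_n}$. Inserting this into the reduced equation and using assumption (\ref{a6}) to confine each root to a shrinking window around the deterministic location $\rho_j$ characterised by $\theta_j m_{\gamma_n}(\rho_j)=1$, a delta-method linearization around $\rho_j$ gives
\begin{equation*}
\sqrt{n}\,\frac{\hat l^s_j - \rho_j}{\sigma_j} \;=\; \frac{\sqrt{n}\,(\hat m_n(\rho_j)-m_{\gamma_n}(\rho_j))}{\sigma_j\, m'_{\gamma_n}(\rho_j)} + o_p(1),
\end{equation*}
so that the spike fluctuation is expressed as a linear spectral statistic of $A_n$ at the analytic test function $g_{(n,j)}(\lambda)=(\rho_j-\lambda)^{-1}$.

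The final step is to upgrade this CLT to a first-order Edgeworth expansion by importing the refined Bai--Silverstein expansion in the spirit of the rank-one PCA argument of Yang and Johnstone (2018). For the linear spectral statistic $n(\hat m_n(\rho_j)-m_{\gamma_n}(\rho_j))$, the leading variance and third cumulant are identified, under (\ref{a4})--(\ref{a5}), with $\kappa_{2,n}=2F_{\gamma_n}(g_{(n,j)}^2)$ and $\kappa_{3,n}=8F_{\gamma_n}(g_{(n,j)}^3)$, while the $O(n^{-1/2})$ bias of its mean is precisely the Bai--Silverstein centering term $\mu(g_{(n,j)})$. Plugging these three quantities into the standard Edgeworth formula and combining the $n^{-1/2}$ shift from the mean bias with the skewness correction produces $p_{1,n}(x)=\tfrac16\kappa_{2,n}^{-3/2}\kappa_{3,n}(1-x^2)-\kappa_{2,n}^{-1/2}\mu(g_{(n,j)})$, uniformly in $x$, with an $o(n^{-1/2})$ remainder absorbing the delta-method and concentration errors.

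I expect the main obstacle to be propagating errors through the reduced determinant equation with $n^{-1/2}$ Edgeworth accuracy rather than merely CLT accuracy. The off-diagonal entries of $n^{-1} Z^\top(\lambda I_n - A_n)^{-1}Z$ are of the same $n^{-1/2}$ order as the diagonal fluctuations and appear in the expansion of $\log\det(I_r - \Theta^{1/2}(\cdot)\Theta^{1/2})$ at cubic order, so one must verify that they do not contribute to the skewness beyond what is already captured by $\kappa_{3,n}$. This effect is invisible in the rank-one PCA case of Yang--Johnstone and is precisely where the structural use of the subsampled estimates (\ref{parm.FA}) for $\rho_j,\sigma_j$ becomes essential, since no explicit analytic form for these parameters is available for a general $\Theta$.
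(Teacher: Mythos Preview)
Your proposal is correct and follows essentially the same route as the paper: exploit the additive decomposition $\Sigma_s = n^{-1}\E\E^\top + n^{-1}Z\Theta Z^\top$ into two positive semidefinite pieces, reduce the eigenvalue problem to an $r\times r$ determinant equation via Sylvester's identity, and then carry over the Yang--Johnstone (2018) Edgeworth argument for the resulting linear spectral statistic $g_{(n,j)}(\lambda)=(\rho_j-\lambda)^{-1}$ of the noise matrix. The paper's sketch in the Overview and the paragraph following the lemma describe exactly this strategy, and your identification of the off-diagonal contributions of $n^{-1}Z^\top(\lambda I_n - A_n)^{-1}Z$ as the main technical obstacle beyond the rank-one case is on point.
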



The above lemma exploits the positive semidefinite structure of the substituted covariance matrix to obtain
a first-order \emph{Edgeworth correction} of spiked eigenvalues of the substituted covariance
matrix. However, using the spiked eigenvalues of the substitution covariance matrix in Algorithm \ref{alg1}
would be computationally cumbersome as it would entail estimating $H$ and $\E$ for every submatrix. 
Since Algorithm \ref{alg1} is given in the terms of the sample covariance matrix, we need to extend the above lemma 
to the spiked eigenvalues of the sample covariance matrix. 
For extending the above \emph{Edgeworth correction} result to the spiked eigenvalues of sample covariance matrix, we use an upper bound on the spectral norm of the cross product of two random matrices and the Delta theorem.

\subsection{Spectral Norm of Cross Product of Two Random Matrices}
The upper bound of the expectation of the spectral norm is known for random matrices with independent elements, see \cite{bandeira2016norm},\cite{handel2017norm}, \cite{seginer2000norm}, and \cite{latata2004norm}.
 However, none of the existing results give an upper bound for the expectation
of the spectral norm of a cross product of two random matrices. To this end, we propose the following theorem.
 
\begin{proposition}\label{comp.prop}Let $H, \Theta, \Lambda, \E$ be as defined in (\ref{matrix.factor}) and assume that assumptions 
(\ref{a1})-(\ref{a6}) hold, then we have the following bound
\ba
E( || Z \Theta^{1/2} \Lambda^\top \E^\top + \E \Lambda \Theta^{1/2} Z^\top||) \le (8^22^{1/4} \exp(2) log(2n)) \sqrt{n log(p) log(r)} \max_{1 \le l \le r } \sqrt{\theta}_l .
\ea
\end{proposition}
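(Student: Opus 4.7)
The plan is to express the target symmetric matrix as a sum of $r$ independent symmetric rank-two random matrices and invoke a matrix Bernstein--type concentration inequality. Setting $B = Z \Theta^{1/2} \Lambda^\top \E^\top$, the object of interest is $B + B^\top$, which by a column-wise expansion along the $r$ factor directions satisfies
\[
B + B^\top \;=\; \sum_{k=1}^{r} \sqrt{\theta_k}\, Y_k, \qquad Y_k \;:=\; z_k (\E \lambda_k)^\top + (\E \lambda_k)\, z_k^\top,
\]
where $z_k$ and $\lambda_k$ denote the $k$-th columns of $Z$ and $\Lambda$, respectively.

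The crucial structural observation is mutual independence of the summands. By A.4 the columns $z_1, \ldots, z_r$ of $Z$ are i.i.d.\ $N(0, I_n)$; by A.1 the columns $\lambda_1, \ldots, \lambda_r$ of $\Lambda$ are orthonormal, so under A.5 the vectors $\E \lambda_1, \ldots, \E \lambda_r$ are also i.i.d.\ $N(0, I_n)$; and A.2 makes the two families mutually independent. Hence $\{z_k, \E \lambda_k\}_{k=1}^{r}$ are $2r$ jointly independent standard Gaussians in $\mathbb{R}^n$, and the $Y_k$ are independent, centered, symmetric. A direct calculation using $E[(y^\top x) x y^\top] = I_n$ and $E[\|y\|^2 x x^\top] = n I_n$ for independent $x, y \sim N(0, I_n)$ yields $E[Y_k^2] = 2(n+1) I_n$, so the matrix variance parameter is
\[
v \;:=\; \left\| \sum_{k=1}^{r} \theta_k\, E[Y_k^2] \right\| \;=\; 2(n+1) \sum_{k=1}^{r} \theta_k \;\le\; 2(n+1)\, r \max_l \theta_l.
\]
Moreover $\| \sqrt{\theta_k}\, Y_k \| \le 2 \sqrt{\theta_k}\, \|z_k\|\|\E \lambda_k\|$ is a scaled product of two independent $\chi_n$ random variables, which is sub-exponential with Orlicz $\psi_1$-norm of order $n \sqrt{\theta_{\max}}$.

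The final step combines these ingredients through Tropp's sub-exponential version of the matrix Bernstein inequality (obtained by truncating each $Y_k$ at scale $L \sim n \sqrt{\theta_{\max}} \log n$ and applying the Ahlswede--Winter trace method to the bounded part), yielding an expected spectral norm bound of the form
\[
E \left\| \sum_{k=1}^{r} \sqrt{\theta_k}\, Y_k \right\| \;\le\; C \bigl( \sqrt{v \log(2n)} \;+\; L (\log(2n))^{2} \bigr).
\]
The $\log(2n)$ factor tracks the ambient dimension of the $n \times n$ symmetric matrix and matches the leading logarithm in the stated bound. The main obstacle is improving the naive $\sqrt{r\, \theta_{\max}}$ dependence read off from $\sqrt{v}$ to the $\sqrt{\log(r)\log(p)\,\theta_{\max}}$ dependence of the target inequality. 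The natural route is to exploit the Gaussian product structure: condition on $\E$, apply Gaussian concentration for the spectral norm of the linear map $Z \mapsto Z \Theta^{1/2} (\E \Lambda)^\top$, and cover the relevant unit spheres of dimensions $r$ and $p$ with $\epsilon$-nets of log-cardinalities of order $\log r$ and $\log p$, respectively. Uniformizing the sub-exponential tails across these nets while retaining the matrix Bernstein $\log(2n)$ factor---via Dudley's entropy integral or Bandeira--van Handel type structural bounds---is the technical heart of the argument, and is where the explicit constants $8^2 \cdot 2^{1/4} \exp(2)$ arise.
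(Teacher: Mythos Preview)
Your rank-two decomposition $B+B^\top=\sum_{k=1}^r\sqrt{\theta_k}\,Y_k$ with independent summands and the variance computation $E[Y_k^2]=2(n+1)I_n$ are both correct, and diagnosing that matrix Bernstein delivers $\sqrt{r}$ rather than $\sqrt{\log r}$ is the right observation. The gap is in the proposed repair. An $\epsilon$-net on the unit sphere of $\mathbb{R}^d$ has cardinality of order $(C/\epsilon)^d$, so its log-cardinality is of order $d$, not $\log d$. Covering the $r$- and $p$-dimensional unit spheres therefore contributes factors of $r$ and $p$ (or their square roots), never $\log r$ and $\log p$; Dudley's entropy integral and the Bandeira--van Handel structural bounds obey the same scaling and cannot convert a genuine $\sqrt{r}$ dependence into $\sqrt{\log r}$. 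Thus the route you sketch for the ``technical heart'' cannot close the gap as written. It is also worth noting that already for $r=1$ the single term has $\|Y_1\|=\|z_1\|\,\|\E\lambda_1\|+|z_1^\top\E\lambda_1|$, whose expectation is of exact order $n$; hence $E\|B+B^\top\|$ is of order $n\sqrt{\theta_{\max}}$ in that case, which exceeds $\sqrt{n}\sqrt{\theta_{\max}}$ times any polylogarithmic factor. That is a lower-bound obstruction to \emph{any} proof of the inequality in its stated form, and you should flag it rather than try to engineer around it.

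Independently of that issue, your approach differs from the paper's. The paper does not invoke matrix concentration inequalities; it bounds $E\operatorname{tr}\bigl[(B+B^\top)^{2k}\bigr]$ directly by expanding the trace into closed walks and exploiting the cancellation forced by independence of $Z$ and $\E$ together with the orthonormality of the columns of $\Lambda$, in the tradition of F\"uredi--Koml\'os, Seginer, and Vu. In such moment-method arguments the logarithmic factors arise from optimizing the moment order $k$, not from covering numbers.
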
 

\begin{proof}
See the Appendix.
\end{proof}

The proof of the above proposition uses the cancellation of terms in the expansion of trace to derive the above bound of the
spectral norm, see \cite{furedi1981norm},\cite{vu2007norm}, \cite{seginer2000norm} for reference. A comparable approach is to condition the cross product of two matrices on one of the two matrices and then repeatedly apply Theorem 1.1 of \cite{vershynin2011norm} to get the upper bound. Unfortunately, such an approach in our context yields an upper bound of $O(\max_{1 \le l \le r}\sqrt{\theta}_l n)$ which is not sharp compared to Proposition \ref{comp.prop}.
The above spectral upper bound of the cross product of two random matrices along with the Delta theorem allows us
to extend  the \emph{Edgeworth correction} result for the spiked eigenvalues of the substituted covariance matrix 
to the spiked eigenvalues of the sample covariance matrix.

\begin{theorem}\label{final.thm}
Assume the data matrix $\textbf{X}$, generated under the factor
model in (\ref{matrix.factor}), satisfy assumptions (\ref{a1})-(\ref{a6}). Let the singular value
decomposition of the population covariance matrix $\Sigma$ be given by $V M V^\top$, where the diagonal matrix $M$ is given as follows
\be
M = \begin{bmatrix} L & \textbf{0}^\top_{r \times n-r} \\
\textbf{0}_{r \times n-r} &  I_{n-r \times n-r},
\end{bmatrix}
\ee
 with $L = diag(l_1, \cdots, l_r)$, $l_1 > \cdots > l_r > (1 + \sqrt{\gamma})$, $\textbf{0}_{r\times n-r}$ denoting the matrix of dimension $r\times n-r$ with all zero elements, and $V$ denoting a unitary matrix of dimension $n \times n$.

Let $\hat{l}_i$ be the $i^{th}$ largest eigenvalue of the sample covariance matrix. Assume that the centering and scaling parameters 
$\rho_i$ and $\sigma_i$ are given in (\ref{parm.FA}).
Then, we show that the cumulative distribution function of $\mathcal{T}_{n,i} = n^{1/2}( \hat{l}_i - \rho_i )/\sigma_i$ has a first-order Edgeworth expansion as follows
\be
P( \mathcal{T}_{n,i} \le x) = \Phi(x) + n^{-1/2} p_1(x) \phi(x) + o(n^{-1/2}),
\ee
valid uniformly in $x\in \mathbb{R}$, and with
\be
p_1(x) = \frac{1}{6} \kappa^{-3/2}_{2,n} \kappa_{3,n} (1 - x^2) - \kappa^{-1/2}_{2,n}\mu(g_{(n,j)}),
\ee
for $g_{(n,j)}(\lambda) = (\rho_j - \lambda)^{-1}$, $\kappa_{2,n} = 2 F_{\gamma_n}(g_{(n,j)}^2)$, $\kappa_{3,n} = 8 F_{\gamma_n}(g_{(n,j)}^3)$ , $\mu(\cdot)$ is the asymptotic mean in the central limit theorem of Bai-Silverstein limit, see Bai and Silverstein (2004) \cite{bai2004clt}.
\end{theorem}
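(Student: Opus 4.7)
The plan is to use Lemma \ref{edgeworth.sub}, which delivers the first-order Edgeworth expansion for the spiked eigenvalues of the substitution covariance matrix $\Sigma_s = (H\Theta H^\top + \E\E^\top)/n$, and to transfer this expansion to the spiked eigenvalues of the true sample covariance $\hat{\Sigma} = \textbf{X}\textbf{X}^\top/n$ via a perturbation/Delta-theorem argument driven by Proposition \ref{comp.prop}. Expanding (\ref{matrix.factor}) and using $\Lambda^\top\Lambda = I_r$ (assumption \ref{a1}) yields the clean decomposition
\beqs
\hat{\Sigma} \;=\; \Sigma_s \;+\; \Delta_n,\qquad \Delta_n := \tfrac{1}{n}\bigl(Z\Theta^{1/2}\Lambda^\top\E^\top + \E\Lambda\Theta^{1/2}Z^\top\bigr),
\eeqs
identifying $\Delta_n$ as the sole obstruction between the two matrices. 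The proof reduces to showing that $\|\Delta_n\|$ perturbs the spiked eigenvalues by an amount negligible on the scale $\sigma_i/\sqrt{n}$.

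Weyl's inequality delivers $|\hat{l}_i - \hat{l}_i^s| \le \|\Delta_n\|$ for every $i \in \{1,\ldots,r\}$, whence $|\mathcal{T}_{n,i} - \mathcal{T}^{s}_{n,i}| \le \sqrt{n}\,\|\Delta_n\|/\sigma_i$. Proposition \ref{comp.prop} combined with Markov's inequality then yields $\|\Delta_n\|$ of order $\log(n)\sqrt{\log(p)\log(r)/n}\cdot\sqrt{\theta_{\max}}$ in probability. Signal-strength assumption \ref{a6} forces $\theta_i \gtrsim p^{1+\delta}$, and applying the large-spike asymptotics $\sigma(l,\gamma)\sim\sqrt{2}\,l$ from (\ref{parm}) to the subsampling-based parameters of (\ref{parm.FA}) shows that $\sigma_i$ scales like $\theta_i$. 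With $p \asymp n$ from $\gamma_n \to \gamma \in (0,1]$,
\beqs
\sqrt{n}\,\|\Delta_n\|/\sigma_i \;=\; O\!\bigl(\log(n)\sqrt{\log(p)\log(r)}\cdot p^{-(1+\delta)/2}\bigr) \;=\; o(n^{-1/2}),
\eeqs
in probability, using $\delta > 0$.

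The final step is the Delta-theorem-style transfer. Setting $R_{n,i} := \mathcal{T}_{n,i} - \mathcal{T}^{s}_{n,i} = o(n^{-1/2})$ and choosing $\eta_n\downarrow 0$ with $\eta_n = o(n^{-1/2})$ and $P(|R_{n,i}| > \eta_n) = o(n^{-1/2})$, the sandwich
\beqs
P(\mathcal{T}^{s}_{n,i} \le x - \eta_n) - o(n^{-1/2}) \;\le\; P(\mathcal{T}_{n,i} \le x) \;\le\; P(\mathcal{T}^{s}_{n,i} \le x + \eta_n) + o(n^{-1/2})
\eeqs
holds for every $x\in\mathbb{R}$. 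Lemma \ref{edgeworth.sub} evaluates both endpoints, and the Lipschitz continuity of $\Phi$ (constant $1/\sqrt{2\pi}$) together with the boundedness of $p_1\phi$ and its derivative absorbs the $\pm\eta_n$ shift into an additional $o(n^{-1/2})$ error uniformly in $x$. Combining yields the claimed expansion.

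The main obstacle is the rate analysis in the second paragraph. A naive bound on $\|\Delta_n\|$ obtained by conditioning on $Z$ and invoking a Latala- or Vershynin-type spectral estimate for a Gaussian matrix would produce an extra factor of $\sqrt{n}$, and the resulting $\sqrt{n}\,\|\Delta_n\|/\sigma_i$ would fail to be $o(n^{-1/2})$. The sharp $\sqrt{n\log(p)\log(r)}$ rate of Proposition \ref{comp.prop}, obtained via cancellations in the trace expansion, is precisely what is needed to absorb the perturbation into the $n^{-1/2}$ Edgeworth remainder; everything downstream is a routine uniform-in-$x$ computation.
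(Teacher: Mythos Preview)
Your proposal is correct and follows essentially the same route the paper outlines: invoke Lemma \ref{edgeworth.sub} for the substitution covariance matrix, control the perturbation $\Delta_n$ via the sharp spectral bound of Proposition \ref{comp.prop}, and transfer the expansion by a Delta-theorem/sandwich argument. Your explicit remark that a naive Vershynin-type conditioning bound would lose a factor of $\sqrt{n}$ and fail mirrors the paper's own motivation for Proposition \ref{comp.prop}.
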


\begin{proof}
See the Appendix.
\end{proof}
Although, the above theorem is given for the spiked eigenvalues of the sample covariance matrix but in practice we can only compute the centering and scaling parameter for the spiked eigenvalues of the subsampled covariance matrices.
Later in the paper, we use the above theorem to obtain a normal approximation of the spiked eigenvalues of the sample covariance matrix of the lone submatrix.

\subsection{Comparison with the existing approaches}
We numerically compare the above normal approximation with the approximation proposed by Larsen and Warne (2010) \cite{larsen2010} and the first-order \emph{Edgeworth correction} for spiked eigenvalues of PCA, see \cite{johnstone2018pca}.
For finite sample, the mean of the above two approximation are far from the mean obtained using the subsampling approach. This is because the loadings in the FA are distributed across $p$ factors which may cause the population mean to become comparatively bigger or smaller. For a fair comparison, we center the two competing approximations the mean obtained using the subsampling approach. Figure \ref{F:1} suggests that the two competing approximations overestimate the variance of the spiked eigenvalues.

Note that the assumption (\ref{a6}) for FA is far from benign. However, given that we lack an \emph{Edgeworth expansion} result for
the spiked eigenvalues of the sample covariance matrix, we believe that the above result could be useful. Also, recall that Yang and Johnstone (2018) \cite{johnstone2018pca} for the spiked eigenvalues of the sample covariance matrix for PCA under relatively simpler
assumptions. Using these two results, we show that it is possible to gauge the coverage accuracy of our proposed confidence interval
in algorithm \ref{alg1}.

\begin{figure}
\includegraphics[width=5.5in]{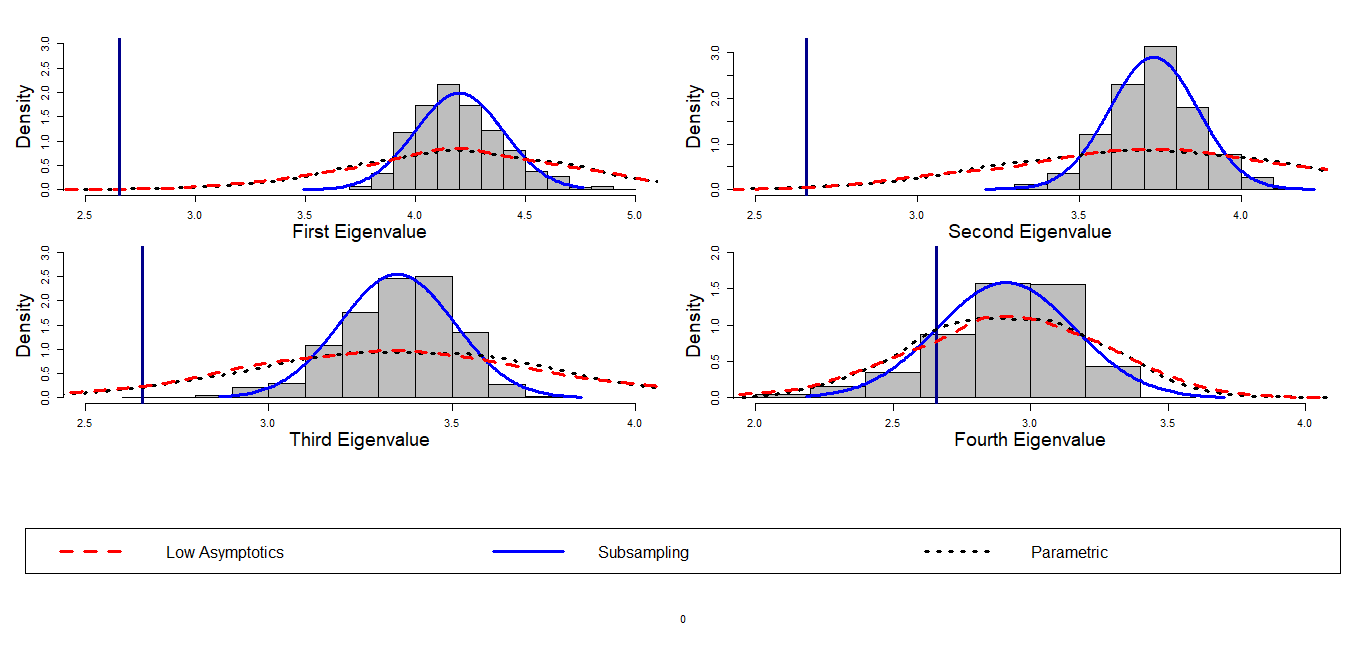}
\caption{Approximation comparison. We generated $1000$ data matrices under FA when $n = 1500$, $p =600$, $r=4$, and the scale factor, i.e. $(\theta_1, \theta_2, \theta_3, \theta_4)=(15,15,15,15)$. We used $1000$ data matrices to plot the histogram of top four spiked eigenvalues of the subsampled covariance matrices. Subsequently, we approximated the histogram of top four spiked eigenvalues using Larsen and Warne (2010)'s \cite{larsen2010} approach (Low Asymptotics),  first-order \emph{Edgeworth correction} for PCA proposed by Yang and Johnstone (2018) \cite{johnstone2018pca} (Parametric), and the subsampling approach. The three approximations along with the histogram is plotted for first eigenvalue, second eigenvalue, fourth eigenvalue and third eigenvalue of the subsampled covariance matrices in clockwise manner. Verical dark blue lines are drawn at $(1 + \sqrt(\gamma))^2$.
}\label{F:1}
\end{figure}


\subsection{Coverage Probability}
At the heart of Algorithm \ref{alg1} is the convolution of two approximately normal random variables, i.e., $\sqrt{b-1}(\frac{ \sum_{j=1}^b\hat{l}^{\pi_j}_i}{b-1} - \rho_i)/ \sigma_i$ and $(l^\star_i - \rho_i)$. 
Recall that the Theorem \ref{edgeworth.sub} states that
$(l^\star_i - \rho_i)$ has an \emph{Edgeworth expansion} whereas Theorem 2.2 of Hall (1993) \cite{hall1993} implies
that $\sqrt{b-1}(\frac{ \sum_{j=1}^b\hat{l}^{\pi_j}_i}{b-1} - \rho_i)/\hat{\sigma}_i$ has an \emph{Edgeworth expansion}.
For the sake of completeness, we restate the above theorem in our setting.

\begin{theorem}
Assume that $\textbf{X}$ generated from (\ref{matrix.factor}). Let  $\{ \textbf{X}_{\pi_i} \}_{i=1}^{b-1}$ be conditionally independent (conditional on $\textbf{X}$) submatrices defined in algorithm \ref{alg1}. Let $\{ \hat{l}^{\pi_j}_i \}_{j=1}^{b-1}$ be $b-1$ $i^{th}$ largest eigenvalue of the sample covariance matrices of $\{ \frac{\textbf{X}_{\pi_i} \textbf{X}^\top_{\pi_i}}{\lfloor n/b \rfloor} \}_{i=1}^{b-1}$. Let $\rho_i$ and $\sigma_i$ be the centering and the scaling parameter defined in (\ref{parm.FA}). 
Then, we show that the cumulative distribution function of $\sqrt{b-1}( \frac{ \sum_{j=1}^b\hat{l}^{\pi_j}_i}{b-1}  - \rho_i)/\sigma_i$  has a first-order Edgeworth expansion as shown below
\be
\label{mean.t} P( \sqrt{b-1}( \frac{ \sum_{j=1}^b\hat{l}^{\pi_j}_i}{b-1}  - \rho_i)/\sigma_i \le x \mid \textbf{X}) = \Phi(x) + (b-1)^{-1/2} p_1(x) \phi(x) + o((b-1)^{-1})
\ee
valid uniformly in $x\in \mathbb{R}$, and with
\be
p_1(x) = \frac{1}{6} \kappa^{-3/2}_{2,n} \kappa_{3,n} (1 - x^2) - \kappa^{-1/2}_{2,n}\mu(g_{(n,j)}),
\ee
for $g_{(n,j)}(\lambda) = (\rho_j - \lambda)^{-1}$, $\kappa_{2,n} = 2 F_{\gamma_n}(g_{(n,j)}^2)$, $\kappa_{3,n} = 8 F_{\gamma_n}(g_{(n,j)}^3)$ , $\mu(\cdot)$ is the asymptotic mean in the central limit theorem of Bai-Silverstein limit, see Bai and Silverstein (2004) \cite{bai2004clt}.
\end{theorem}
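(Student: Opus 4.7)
The plan is to derive the Edgeworth expansion of the standardized sample mean from two ingredients: the conditional i.i.d.\ structure of the subsampled eigenvalues, and the individual Edgeworth expansion for spiked eigenvalues already established in Theorem \ref{final.thm}. By the construction in Algorithm \ref{alg1}, conditional on $\textbf{X}$, the submatrices $\{\textbf{X}_{\pi_j}\}_{j=1}^{b-1}$ are i.i.d.\ copies of the same subsampling mechanism, each of dimension $\lfloor n/b\rfloor \times \lfloor p/b\rfloor$ with aspect ratio $\gamma_n$. Consequently, $Y_j := \hat{l}^{\pi_j}_i$, $j=1,\ldots,b-1$, is a conditionally i.i.d.\ sequence, and the pair $(\rho_i,\sigma_i)$ defined in (\ref{parm.FA}) serves exactly as the common mean and standard deviation of $Y_j$.

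Next, I would apply Theorem \ref{final.thm} with sample size $\lfloor n/b\rfloor$ to each submatrix individually, yielding
\begin{equation*}
P\bigl(\sqrt{\lfloor n/b\rfloor}(Y_j - \rho_i)/\sigma_i \le x \mid \textbf{X}\bigr) = \Phi(x) + \lfloor n/b\rfloor^{-1/2} p_1(x)\phi(x) + o(\lfloor n/b\rfloor^{-1/2}),
\end{equation*}
uniformly in $x\in\mathbb{R}$. From this expansion I can extract the standardized cumulants of $Y_j$ by Fourier inversion: the scaled skewness is $\kappa_{2,n}^{-3/2}\kappa_{3,n}$ and the leading bias contribution is $-\kappa_{2,n}^{-1/2}\mu(g_{(n,j)})$, which together reproduce $p_1(x)$. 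With these cumulant identifications in hand, Theorem 2.2 of Hall (1993) can be invoked on the conditionally i.i.d.\ sequence $\{Y_j\}_{j=1}^{b-1}$; it transfers the individual Edgeworth expansion to the standardized sample mean, producing the \emph{same} polynomial $p_1(x)$ but with the expansion parameter $\lfloor n/b\rfloor^{-1/2}$ replaced by $(b-1)^{-1/2}$. Rewriting $\bar{Y}$ in terms of $\sum_j \hat{l}^{\pi_j}_i/(b-1)$ then recovers (\ref{mean.t}).

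The main obstacle lies in verifying the hypotheses of Hall's theorem conditionally on $\textbf{X}$. Two conditions require care. First, a Cram\'er-type smoothness condition is needed: the conditional characteristic function $|E[\exp(itY_1)\mid\textbf{X}]|$ must be bounded away from $1$ outside a neighborhood of the origin, with probability tending to one in $\textbf{X}$. Under the Gaussianity assumptions (\ref{a4})--(\ref{a5}), the joint density of the top spiked sample eigenvalues inherits real-analyticity from the underlying Gaussian orthogonal ensemble structure, which yields the Cram\'er condition. Second, sufficiently high moments of $Y_j$ are required to control the remainder; these follow from concentration bounds on spiked eigenvalues under the signal-strength assumption (\ref{a6}). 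A careful bookkeeping of how the $o(\lfloor n/b\rfloor^{-1/2})$ individual remainder propagates through averaging --- in particular, checking that it remains dominated by the $(b-1)^{-1/2}$ term for the optimal choice $b = \lfloor n^{1/3}\rfloor$ --- completes the argument.
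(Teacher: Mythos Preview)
Your proposal is correct and follows the same route as the paper: the paper does not give a standalone proof but simply states that the result is a restatement of Theorem~2.2 of Hall (1993) in the present setting, applied to the conditionally i.i.d.\ sequence $\{\hat{l}^{\pi_j}_i\}_{j=1}^{b-1}$ whose individual Edgeworth expansion comes from Theorem~\ref{final.thm}. Your additional remarks on verifying the Cram\'er condition and moment bounds go beyond what the paper records, but they are natural elaborations of exactly the argument the paper intends.
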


We use the above two \emph{Edgeworth expansions} to get a normal approximation of the weighted difference of the two
random variables. Moreover, since the proposed confidence interval in algorithm \ref{alg1} is constructed using a set of
conditionally independent random variables therefore we use \emph{Berry-Esseen} inequality to quantify the accuracy of 
the proposed confidence interval.

\begin{theorem}\label{cov.prob.thm}(Coverage Accuracy)
The absolute deviation between the coverage probability of the confidence interval in Algorithm \ref{alg1} and the confidence level 
$(1-\alpha)$ has the following upper bound.
\be\label{conv.prob}
&& |P( r \in [\hat{\mu} + z_{\alpha/2} \hat{\sigma}, \hat{\mu} + z_{1 -\alpha/2} \hat{\sigma}]) - (1-\alpha)|  \nonumber\\
&&\le   \frac{\rho (2 x^2 + 1) \phi(x)}{6 \sigma^3 \sqrt{K}} + o(\max\{\frac{1}{K}, r\sqrt{\frac{b}{n}}, \frac{r}{b-1} \}),
\ee
where $\sigma$ is the unknown population standard deviation of $\{ \hat{r}_{\beta}\}'s$ in algorithm \ref{alg1}.
\end{theorem}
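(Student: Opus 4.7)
The plan is to decompose the coverage error into two pieces: (a) the Edgeworth/Berry-Esseen error from the normal approximation to the sample mean of $K$ conditionally i.i.d.\ statistics, and (b) the error from approximating the distribution of each individual statistic by a normal law, which in turn is controlled by the Edgeworth expansions established earlier. First I would rewrite the coverage event. Setting $Y_k = \hat{r}^{(i^\star)}_k$ for $k=1,\dots,K$, these are, by the construction in Algorithm~\ref{alg1}, conditionally i.i.d.\ given $\textbf{X}$, with common (conditional) mean $\mu$ and variance $\sigma^2$. When $\beta$ is selected optimally via the objective in (\ref{opt}), the centering $\mu$ coincides with $r+1$. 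After the bias correction by one, the CI coverage probability equals $P(|T_K| \le z_{1-\alpha/2})$ where $T_K = \sqrt{K}\,(\hat{\mu}-\mu)/\hat{\sigma}$; the studentization $\hat{\sigma}\to\sigma$ contributes only higher-order terms.

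Second I would apply a first-order Edgeworth expansion (a refined Berry-Esseen bound) to $T_K$. For i.i.d.\ summands with finite third absolute moment $\rho$, the expansion reads
\ba
P(T_K \le x) \;=\; \Phi(x) \;-\; \frac{\rho\,(x^2-1)}{6\,\sigma^3\,\sqrt{K}}\,\phi(x) \;+\; O(1/K),
\ea
and symmetrizing for the two-sided event $|T_K|\le z_{1-\alpha/2}$ produces a deviation from $(1-\alpha)$ bounded by $\rho\,(2x^2+1)\,\phi(x)/(6\sigma^3\sqrt{K})$ plus an $O(1/K)$ remainder, which is exactly the leading term in (\ref{conv.prob}).

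The remaining work is to control the deviation between the true conditional distribution of each $Y_k$ and a normal law, since the $Y_k$ are not exactly normal. Each $Y_k$ is an average over $M$ conditionally i.i.d.\ preliminary estimates $r^{(i^\star)}_{k,j}$, and each preliminary estimate is determined by whether the lone spiked eigenvalue $l^\star_s$ lies inside a CI built from the $(b-1)$ submatrices. Writing this event as a standardized deviation between the lone eigenvalue and the mean of $(b-1)$ i.i.d.\ eigenvalues, I would invoke two ingredients already in hand: Theorem~\ref{final.thm}, applied at subsample size $\lfloor n/b\rfloor$, which supplies an Edgeworth expansion for the lone eigenvalue with remainder of order $\sqrt{b/n}$ per spike; and the companion Edgeworth expansion for the mean of $(b-1)$ spiked eigenvalues (the theorem stated just before the present one, which in turn invokes Hall (1993), Theorem~2.2), contributing a remainder of order $1/(b-1)$ per spike. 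The CDF of a difference of two variables each admitting a first-order Edgeworth expansion has a clean convolution form, whose remainder is dominated by the maximum of the two orders; taking a union over the $r$ spike indices (needed because $r^{(i)}$ is the first index escaping its CI) then yields the $o(r\sqrt{b/n})$ and $o(r/(b-1))$ terms in (\ref{conv.prob}).

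The main obstacle is propagating the Edgeworth-level errors on the continuous spiked eigenvalues through the discrete, indicator-based definition of $r^{(i)}$ (``the first index whose eigenvalue escapes its CI'') and then through the two nested averages (over $M$ and over $K$), without inflating the remainder. Two points need care: (i) the event of interest must occur in a region where the Edgeworth density is genuinely accurate, so that the $o(\cdot)$ remainders translate faithfully to probability bounds; and (ii) the averaging over $M$ must smooth out the indicator-induced discreteness of the $r^{(i)}_{k,j}$ without introducing a bias that would spoil the centering $\mu = r+1$ used above. Balancing the two Edgeworth contributions $\sqrt{b/n}$ and $1/(b-1)$ will then force the optimal choice $b=\lfloor n^{1/3}\rfloor$ identified earlier in the paper, and collecting all three error orders yields exactly (\ref{conv.prob}).
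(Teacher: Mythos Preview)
Your proposal is correct and follows essentially the same route the paper takes: the leading term $\rho(2x^2+1)\phi(x)/(6\sigma^3\sqrt{K})$ comes from a first-order Edgeworth/Berry--Esseen expansion of the $K$-fold average of conditionally i.i.d.\ statistics, while the $o(\max\{1/K,\,r\sqrt{b/n},\,r/(b-1)\})$ remainder is obtained by convolving the Edgeworth expansion of the lone spiked eigenvalue (Theorem~\ref{final.thm} at subsample size $\lfloor n/b\rfloor$) with that of the mean of $(b-1)$ spiked eigenvalues (via Hall's Theorem~2.2), then taking a union over the $r$ spike indices. The two obstacles you flag---propagating Edgeworth-level control through the indicator-based definition of the preliminary estimate and through the $M$-average---are exactly the delicate steps, and the paper handles them in the same spirit you outline.
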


The above theorem bounds the departure of the coverage probability of our proposed confidence interval from the 
expected confidence level. Clearly, minimizing this bound would improve the performance of the confidence interval.
It is easy to see the error bound is minimized as a function of $b$ when $b = \lfloor n^{1/3} \rfloor$. Moreover, the above
error bound will be minimized by selecting a high value of $K$ while minimizing the skewness $\rho$. 

\begin{remark}\label{remark.b}
The choice of $b = \lfloor n^{1/3} \rfloor$ minimizes the coverage probability in the Theorem \ref{cov.prob.thm}
\end{remark}

\begin{remark}
Note that the R.H.S. of (\ref{conv.prob}) decreases to zero as long as $r n^{-1/3}$ goes to zero. Therefore for higher
value of $r$ the performance of our proposed algorithm is worse. Moreover, we see that the first term in R.H.S. of (\ref{conv.prob})
depends on the normal density evaluated at $z_{\alpha}$, i.e. $\phi(z_{\alpha})$. It is easy to see the performance
of our proposed interval is better when $\phi(z_{\alpha})$ is considerably smaller.
\end{remark}

\subsection{Justification of optimum $\beta$} 

Recall from algorithm \ref{alg1} that $\{\beta_l\}_{l=1}^g$ consist of a grid of values ranging from $50\%$ to $100\%$. From the definition of skewness $\rho$, in the R.H.S. of (\ref{conv.prob}), is 
$E( (\hat{r}_{\beta} - r - 1)^3 )$. 
Let $l(r \mid \{ \hat{r}_{\beta}\}, \textbf{X})$ denote the conditional log likelihood given $\textbf{X}$. Let $\dot{l}(r \mid \{ \hat{r}_{\beta}\}, \textbf{X})$, $\ddot{l}(r \mid \{ \hat{r}_{\beta}\}, \textbf{X})$, and $\dddot{l}(r \mid \{ \hat{r}_{\beta}\}, \textbf{X})$ denote the first, second and third derivative of the conditional log likelihood, where the derivative taken with respect to $r_{\beta}$. Then, applying Taylor series expansion for a fixed $\hat{r}_{\beta}$, we get
\ba
\dot{l}(r + 1 \mid \{ \hat{r}_{\beta}\}, \textbf{X})= \dot{l}(\hat{r}_{\beta} \mid \{ \hat{r}_{\beta}\}, \textbf{X}) + (r + 1 - \hat{r}_{\beta}) \ddot{l}(\hat{r}_{\beta} \mid \{ \hat{r}_{\beta}\}, \textbf{X}) + \frac{ (r + 1 - \hat{r}_{\beta})^2}{2} \dddot{l}(\hat{r}_{\beta} \mid \{ \hat{r}_{\beta}\}, \textbf{X}) 
\ea
Recall by definition, for optimum $\beta^\star$, we have $\dot{l}(\hat{r}_{\beta^\star} \mid \{ \hat{r}_{\beta^\star}\}, \textbf{X})=0$. Then, we have
\be\label{cons.r}
(r + 1 - \hat{r}_{\beta^\star}) = \frac{\dot{l}(r + 1 \mid \{ \hat{r}_{\beta^\star}\}, \textbf{X})}{\ddot{l}(\hat{r}_{\beta^\star} \mid \{ \hat{r}_{\beta^\star}\}, \textbf{X}) + 0.5(r + 1 - \hat{r}_{\beta^\star}) \dddot{l}(\hat{r}_{\beta^\star} \mid \{ \hat{r}_{\beta^\star}\}, \textbf{X})}.
\ee

Then, using the Central Limit Theorem, the boundedness of the third derivative of the conditional likelihood in (\ref{cons.r}) to show that $|r + 1 - \hat{r}_{\beta^\star}|$ is upper bounded by a constant. This can be shown by verifying the Cramer-Rao conditions, see \cite{lehmann99}.

For bounding the coverage accuracy in the R.H.S. of (\ref{conv.prob}), we show that the two terms in the R.H.S. of (\ref{conv.prob})
can be minimized. Consider the first term in the R.H.S. of (\ref{conv.prob}) that is given in terms of  $\rho,\sigma, K$, and $\phi(z_{\alpha/2})$, where $\phi(\cdot)$ is the normal density and $z_{\alpha/2}$ is the normal density quantile evaluated at $\alpha/2$. From the preceeding discussion in this section, $\rho$ is bounded when $\beta$ is selected as described in the algorithm \ref{alg1}.
Suppose for fixed large $K, M, \epsilon_0$, we have $\sigma > \sigma_0 > 0$. Then, the first term becomes small for large quantiles $z_{\alpha/2}$ and large $K$. From Remark \ref{remark.b} and constant $r$, the second term in the R.H.S. of (\ref{conv.prob}) is $o(n^{-1/3})$. For non-asymptotic case, the performance of proposed confidence intervals in the algorithm \ref{alg1} is maximized by selecting $b=\lfloor n^{1/3}\rfloor$, $\beta$ the maximum likelihood estimate described in the algorithm \ref{alg1}, a large enough $K, M$, and a large confidence interval $(1 - \alpha)$. The consistency of the coverage accuracy follows from letting $K$ and $n$ increase.

\begin{figure}
\includegraphics[width=6in]{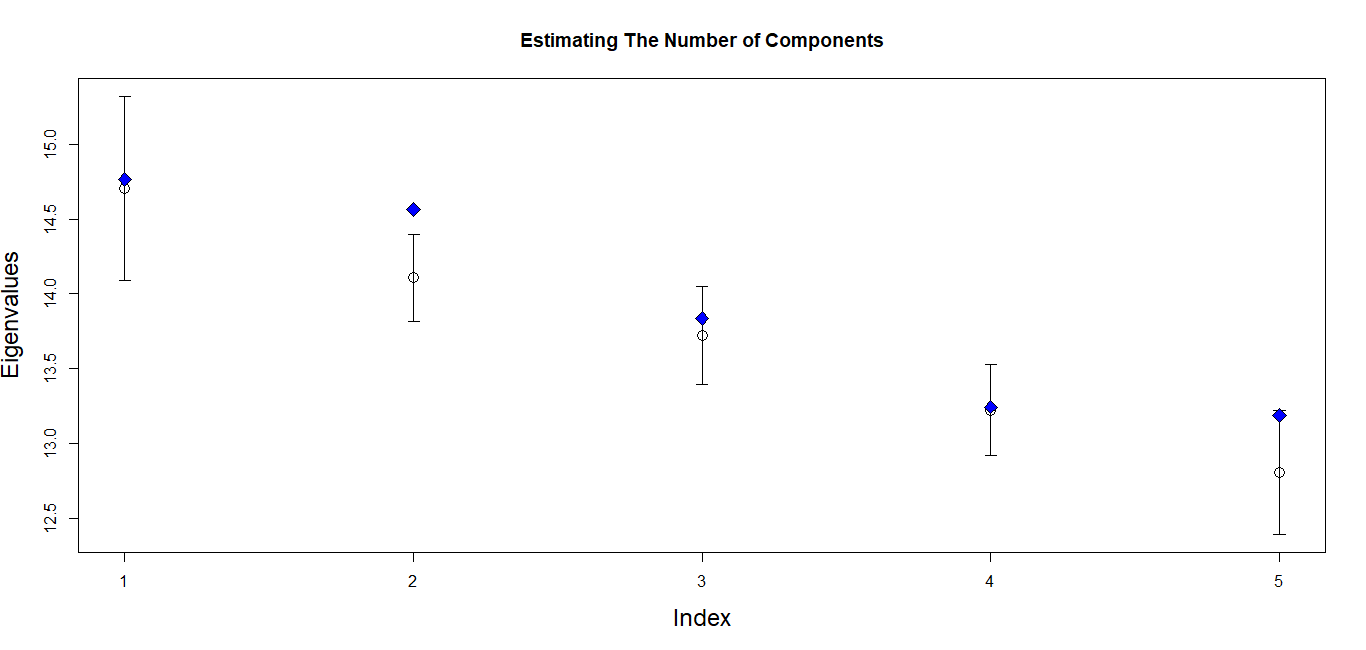}
\caption{Visual representation of the scheme to obtain the preliminary point estimate of the number of components. The depicted confidence interval is constructed using $b-1$ eigenvalues of  subsampled covariance matrices whereas square dots denote the eigenvalues of the sample covariance matrix of the lone subsampled matrix. The preliminary point estimate is one less than the first time when the square dots are not in the confidence interval.}\label{F:2}
\end{figure}


\section{Numerical Experiments}\label{sim}
We evaluated the performance of Algorithm \ref{alg1} (SS) by computing coverage probability for various simulation scenarios that varied with the true number of components $r$, the aspect ratio $\gamma_n$, and the scale of the components $\{\theta_i\}_{i=1}^r$. In particular, we vary the true number of factors $r$ over $\{0,1, 2, 3, 4, 5\}$, the aspect ratio $\gamma_n$ over $\{0.2, 0.5\}$. We choose the scale of components in an increasing manner, as shown in Table \ref{T.0}. The rationale for choosing the scale of components $\{ \theta_i \}_{i=1}^r$ in Table \ref{T.0} is to avoid any multiplicity of spiked eigenvalues while keeping the spiked eigenvalues moderately large. We use the settings in Table \ref{T.0} to test our method when the underlying data matrix was generated under a factor model and a principal component model.

\begin{table}[H]
\caption{Choice of Scaling Factor. }\label{T.0}
\centering
\begin{tabular}{| l | l |}
\hline
 True Number of Components & Scale Factor \\
\hline
 0 & NA \\
1 & \{10\} \\
2 & \{10, 15\} \\
3 & \{10, 15, 20\} \\
4 & \{10, 15, 20, 25\} \\
5 & \{10, 15, 20, 25, 30\} \\
\hline
\end{tabular}
\end{table}

Tables \ref{T.1} -\ref{T.2} compile the coverage probability for our proposed confidence intervals at range of cofidence levels over
$\{ 5\%, 10 \%, 20\%, 40\%, 60\%, 80\%, 90\%, 95\% \}$ for the simulation scenarios in Table \ref{T.0}. For computing the coverage probability, we generated the data matrix $100$ times using the factor model in (\ref{matrix.factor}) with the parameter settings in Table \ref{T.0}. Then, for each of the 100 simulations we used Algorithm \ref{alg1} 

As discussed in Section \ref{theory}, Tables \ref{T.1}-\ref{T.2} confirm that the performance of our proposed confidence interval is better for smaller and larger confidence levels, i.e., \{$0.05, 0.1$\} or \{$0.9, 0.95$\} confidence levels. Adding to this observation, Tables \ref{T.1} clearly underlines that the performance of our confidence interval for middling confidence levels is highly inacccurate at larger number of components. This effect is somewhat mitigated for larger value of $\gamma_n$, i.e., for $\gamma_n =0.5$, see Table \ref{T.2}. This
holds because for larger $\gamma_n$, the standard deviation in Theorem \ref{cov.prob.thm} is larger.
Moreover as discussed, there is a trend that the performance of our proporsed confidence intervals worsen with increasing number of components. 

\subsection{Comparison}
In the previous subsection, we studied the performance of Algorithm \ref{alg1} when the data matrix $\textbf{X}$ satisfies the assumptions \ref{a1}-\ref{a6}. However in practice, it is important to evaluate the performance of Algorithm \ref{alg1} under some
stress scenarios. Motivated by Fan, Guo, and Zheng (2020)\cite{fan2019factor}, we consider the following stress-testing scenarios.

\begin{enumerate}
\item\label{a} Scenario a) ($r=3$), $(\theta_1, \theta_2, \theta_3) = (1,1,10)$, the noise factor = $(1, 1,1 )$, $\gamma=(0.2)$,
\item\label{b} Scenario b) ($r=3$), $(\theta_1, \theta_2, \theta_3) = (10,10,1)$, the noise factor = $(1, 1,1 )$, $\gamma=(0.2)$,
\item\label{c} Scenario c) ($r=3$), $(\theta_1, \theta_2, \theta_3) = (1,1,1)$, the noise factor = $(6, 6, 6 )$, $\gamma=(0.2)$,
\item\label{d} Scenario d) ($r=3$), $(\theta_1, \theta_2, \theta_3) = (1,1,10)$, the noise factor = $(1, 1,1 )$, $\gamma=(0.5)$,
\item\label{e} Scenario e) ($r=3$), $(\theta_1, \theta_2, \theta_3) = (10,10,1)$, the noise factor = $(1, 1,1 )$, $\gamma=(0.5)$,
\item\label{f} Scenario f) ($r=3$), $(\theta_1, \theta_2, \theta_3) = (1,1,1)$, the noise factor = $(6, 6,6 )$, $\gamma=(0.5)$.
\end{enumerate}

Scenarios (\ref{a})- (\ref{f}) is used for stress testing our method and provides a good yardstick for comparing competing methods.
Scenarios (\ref{a}), (\ref{d}) consists of two low strength components and one high strength factor for different value of
$\gamma$. Scenarios (\ref{b}), (\ref{e}) consists of two high strength factors and one low strength factor for $\gamma$
varying over $(0.2, 0.5)$. Scenarios (\ref{c}), (\ref{f}) consists of three low strength factors in presence of elevated level
of random noise.

For evaluating the performance of Algorithm \ref{alg1} (SS) under the stress-scenarios, we compare the performance of Algorithm \ref{alg1} against a couple of existing methods. In particular, we compare Algorithm \ref{alg1} (SS) against BEMA, a method for computing confidence interval proposed by Ke, Ma, and Lin (2020) \cite{ke2020ci} and a non-parametric bootstrap method (NB).
Recall that BEMA relies on approximating the bulk eigenvalue using the weighted quantile of the Marchenko-Pastur distribution.
Therefore in practice, we find that the confidence intervals obtained by BEMA can be rather narrow. For NB method, we sample rows from the data matrix with replacement $M$ times to get $M$ estimates of the number of components. Averaging over $M$ times gives us a point estimate of the number of components. Repeating the above process $M$ times gives us $M$ realizations of the point estimate which we use to estimate standard deviation of the point estimate. Then, we use the normal theory of confidence intervals to get a confidence interval for the number of components.

Table \ref{T.3} compiles coverage probability of Algorithm \ref{alg1} (SS), NB, and BEMA for scenarios (\ref{a})-(\ref{f}). In Scenarios (\ref{a})-(\ref{b}), (\ref{d})-(\ref{e}) the performance of Algorithm \ref{alg1} remains stable wheras the performance of NB and BEMA 
is bad. The bad performance of NB and BEMA can be attributed to the lower strength of signals. For scenarios (\ref{c}), (\ref{f}), the performance ofAlgorithm \ref{alg1} worsens but fare better in comparison to the performance of NB and BEMA. In these scenario, NB performs overestimates the coverage probability whereas BEMA overestimates. The problem with BEMA is that it relies on approximating the bulk eigenvalues and therefore are susceptible

\begin{table}
\caption{Estimated coverage probability of confidence interval for the number of components when
the number of rows and columns of the data matrix was $(1500, 300)$.
}\label{T.1}
\centering
\begin{tabular}{| l | l | l | l | l | l | l | l | l | l |}
\hline
Model & True r & 5 \% CI   & 10 \% CI  & 20 \% CI  & 40 \% CI  &  60 \% CI  & 80 \% CI   &  90 \% CI  &  95 \% CI  \\
\hline
FA & 0 & 0.03 & 0.05 & 0.09 & 0.32 & 0.51 & 0.70 & 0.85 & 0.94 \\
FA & 1 & 0.05 & 0.07 & 0.16 & 0.30 & 0.54 & 0.77 & 0.86 & 0.93 \\
FA & 2 & 0.02 & 0.04 & 0.18 & 0.43 & 0.59 & 0.83 & 0.93 & 0.96 \\
FA & 3 & 0.07 & 0.13 & 0.21 & 0.40 & 0.66 & 0.90 & 0.95 & 0.98 \\
FA & 4 & 0.02 & 0.03 & 0.05 & 0.17& 0.31 & 0.56 & 0.8 & 0.88 \\
FA & 5 & 0.03 & 0.07 & 0.12 & 0.28 & 0.55 & 0.75 & 0.85 & 0.92 \\
\hline
PCA & 0 & 0.05 & 0.1 & 0.18 & 0.38 & 0.62 & 0.81 & 0.91 & 0.94 \\
PCA & 1 & 0.05 & 0.08 & 0.14 & 0.21 & 0.4 & 0.68 & 0.78 & 0.90 \\
PCA & 2 & 0.07 & 0.15 & 0.31 & 0.56 & 0.79 & 0.94 & 0.99 & 1 \\
PCA & 3 & 0.04 & 0.09 & 0.17 & 0.47 & 0.74 & 0.90 & 0.94 & 0.97 \\
PCA & 4 & 0.06 & 0.07 & 0.17 & 0.41 & 0.58 & 0.76 & 0.86 & 0.93 \\
PCA & 5 & 0.01 & 0.04 & 0.07 & 0.21 & 0.39 & 0.68 & 0.81 & 0.89 \\
\hline
\end{tabular}
\end{table}

\begin{table}
\caption{Estimated coverage probability of confidence interval for the number of components when
the number of rows and columns of the data matrix was $(1500, 750)$.
}\label{T.2}
\centering
\begin{tabular}{| l | l | l | l | l | l | l | l | l | l |}
\hline
Model & True r & 5 \% CI   & 10 \% CI  & 20 \% CI  & 40 \% CI  &  60 \% CI  & 80 \% CI   &  90 \% CI  &  95 \% CI  \\
\hline
FA & 0 & 0.06 & 0.11 & 0.23 & 0.43 & 0.66 & 0.91 & 0.98 & 0.99 \\
FA & 1 & 0.08 & 0.09 & 0.12 & 0.37 & 0.56 & 0.81 & 0.97 & 0.99 \\
FA & 2 & 0.06 & 0.11 & 0.2 & 0.45 & 0.67 & 0.86 & 0.95 & 0.99 \\
FA & 3 & 0.04 & 0.07 & 0.17 & 0.41 & 0.60 & 0.75 & 0.86 & 0.96 \\
FA & 4 & 0.06 & 0.11 & 0.30 & 0.54 & 0.76 & 0.94 & 0.97 & 0.97 \\
FA & 5 & 0.03 & 0.07 & 0.16 & 0.36 & 0.53 & 0.7 & 0.84 & 0.93 \\
\hline
PCA & 0 & 0.02 & 0.06 & 0.12 & 0.42 & 0.62 & 0.89 & 0.96 & 0.98 \\
PCA & 1 & 0.04 & 0.10 & 0.20 & 0.35 & 0.57 & 0.81 & 0.92 & 0.96 \\
PCA & 2 & 0.02 & 0.06 & 0.14 & 0.34 & 0.62 & 0.90 & 0.97 & 0.99 \\
PCA & 3 & 0.02 & 0.07 & 0.11 & 0.22 & 0.40 & 0.74 & 0.88 & 0.95 \\
PCA & 4 & 0.03 & 0.06 & 0.11 & 0.29 & 0.54 & 0.74 & 0.85 & 0.92 \\
PCA & 5 & 0.04 & 0.09 & 0.18 & 0.36 & 0.58 & 0.77 & 0.85 & 0.94 \\
\hline
\end{tabular}
\end{table}

\begin{table}
\caption{Estimated coverage probability of confidence interval for the number of components.
}\label{T.3}
\centering
\begin{tabular}{| l | l | l | l | l | l | l | l | l | l | }
\hline
Method & Sc & 5 \% CI   & 10 \% CI  & 20 \% CI  & 40 \% CI  &  60 \% CI  & 80 \% CI   &  90 \% CI  &  95 \% CI  \\
\hline
SS & a &  0.05 & 0.11 & 0.22& 0.38 & 0.63 & 0.78 & 0.86 & 0.92 \\
NB & a &  0 & 0 & 0 & 0 & 0 & 0 & 0 & 0 \\
BEMA & a & 0 & 0 & 0 & 0 & 0 & 0 & 0 & 0 \\
\hline
SS & b &  0.05 & 0.14 & 0.22 & 0.50 & 0.71 & 0.92 & 0.98 & 0.99 \\
NB & b &  0 & 0 & 0 & 0 & 0 & 0 & 0 & 0 \\
BEMA & b & 0 & 0 & 0 & 0 & 0 & 0 & 0 & 0 \\
\hline
SS & c  & 0.06 & 0.16 & 0.28 & 0.53 & 0.76 & 0.96 & 0.98 & 1 \\
NB & c  & 0.5 & 0.73 & 0.93 & 1 & 1 & 1 & 1 & 1 \\
BEMA & c  & 0 & 0 & 0 & 0 & 0 & 0 & 0 & 0 \\
\hline
SS & d  & 0.06 & 0.18 & 0.31 & 0.50 & 0.77 & 0.92 & 1 & 1 \\
NB & d  & 0 & 0 & 0 & 0 & 0 & 0 & 0 & 0 \\
BEMA & d  & 0 & 0 & 0 & 0 & 0 & 0 & 0 & 0 \\
\hline
SS & e & 0.05 & 0.17 & 0.29 & 0.61 & 0.79 & 0.93 & 0.98 & 0.99 \\
NB & e  & 0 & 0 & 0 & 0 & 0 & 0 & 0 & 0 \\
BEMA & e  & 0 & 0 & 0 & 0 & 0 & 0 & 0 & 0 \\
\hline
SS & f  & 0.07 & 0.13 & 0.28 & 0.56 & 0.78 & 0.89 & 0.96 & 1 \\
NB & f  & 0 & 0.02 & 0.09 & 0.56 & 0.93 & 1 & 1 & 1 \\
BEMA & f  & 0 & 0 & 0 & 0 & 0 & 0 & 0 & 0 \\
\hline
\end{tabular}
\end{table}

\begin{figure}
\includegraphics[width=5.5in, height=4.5in]{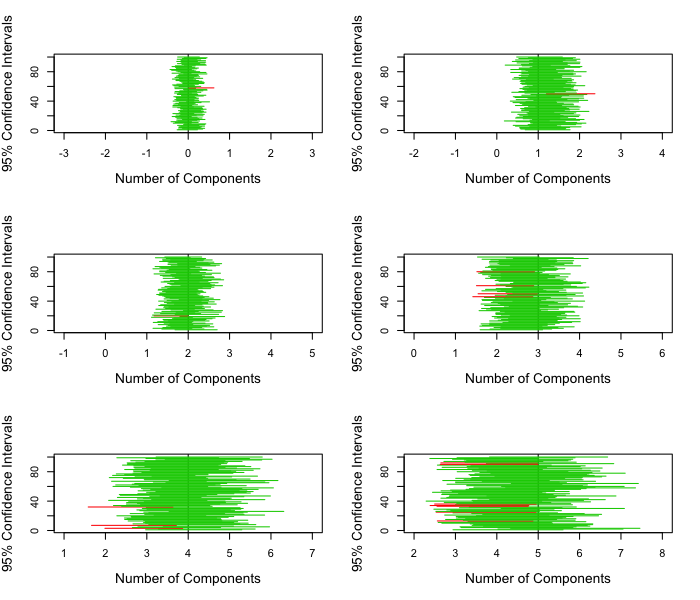}
\caption{Visual representation of the performance of the proposed confidence interval at 95\% confidence level. We plot  hundred 95\% confidence intervals (of the number of components) for dataset of dimension $1500 \times 750$ generated using FA 
for the true number of components $r$ varying over $\{0,1,2,3,4,5\}$. Each confidence is color coded with green (red) denoting that the confidence interval covers (doesn't cover) the true number of components. Every scenario has different number of the true number of components with vertical bar indicating the true number of components.}
\end{figure}

\begin{figure}
\includegraphics[width=5.5in, height=4.5in]{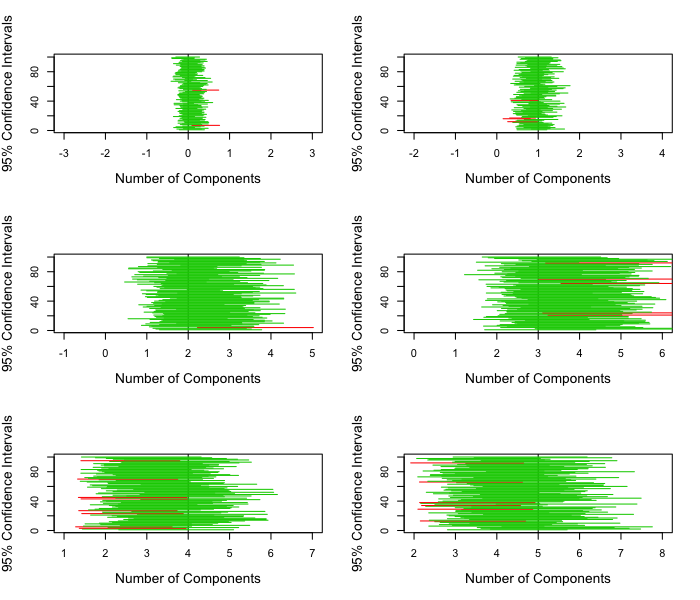}
\caption{Visual representation of the performance of the proposed confidence interval at 95\% confidence level. We plot  hundred 95\% confidence intervals (of the number of components) for dataset of dimension $1500 \times 750$ generated using PCA 
for the true number of components $r$ varying over $\{0,1,2,3,4,5\}$. Each confidence is color coded with green (red) denoting that the confidence interval covers (doesn't cover) the true number of components. Every scenario has different number of the true number of components with vertical bar indicating the true number of components.}
\end{figure}



\section{Real Data Analysis}\label{data}
For real data analysis, we consider the single nucleotide polymorphism (SNP) genotyping dataset from the Human Genome Diversity Project (HGDP) dataset (\cite{cannhgdp2002},\cite{lihgdp2008}). The dataset was collected for studying the genetic variation across the human populations around the world. The SNP genotyping dataset is collected for $1043$ samples from $51$ representing populations from Africa, Europe, Asia, Oceania, and the Americas. The dataset can be downloaded from \emph{https://hagsc.org/hgdp/files.html}.

Following \cite{dobriban2018dpa}, we study the SNP genotyping data for one of the smallest human chromosome $22$. The SNP genotyping dataset on chromosome $22$ comprises of $1043$ samples and $9743$ SNPs. Since we did not have access to the reference allele, we used the allele frequency across $1043$ samples to determine the minor allele at every RSid site of the chromosome. Using the minor allele determination, we converted the SNP genotyping dataset into a numerical matrix of dimension $n \times p$, where $X_{ij} \in \{0, 1, 2\}$ is the number of minor allele of SNP $j$ (at a particular RSid location on the chromosome $22$) in the $i^{th}$ individual. For analyzing the dataset, we replaced any missing values with $0$. Moreover, we centered and scale the remaining dataset SNP-wise. As we discuss, we notice that some results are somewhat different from \cite{dobriban2018dpa}. This is because they had access to the reference allele or a different method for determining minor allele.

We used Likelihood-based method, DPA, and DDPA+ for estimating the number of components. Then, for each of these estimate of the
number of components, we estimate a $99\%$ confidence interval for the number of components. Surprisingly, we notice that the $99 \%$ confidence interval for Likelihood-based method and DPA does not contain the respective estimated number of components. This phenomena can be explained as follows. In particular, this is because both Likelihood-based method and DPA estimated 

Recall that the algorithm \ref{alg1} is constructed using the eigenvalues of the subsampled covariance matrices that have relatively weaker signal strength compared to the original data matrix. Moreover, the empirical histogram of the number of components in algorithm \ref{alg1} is obtained after averaging over conditonally independent subsamples that also brings down the
empirical histogram of the number of components. However, when the estimated number of components is not large, as in the case of DDPA+. Based on the above discussion, it is fair to conclude that the estimated confidence interval tend to be more robust whereas individual methods for estimating the number of components.

\begin{table}
\caption{99 \% Confidence Interval for the Number of Components of SNP genotyping dataset with different estimate of the number of components.}\label{T.3}
\centering
\begin{tabular}{| l | l | l | l | l | l | l | l | l |}
\hline
Estimate & \multicolumn{2}{c}{Likelihood-Based} & \multicolumn{2}{c}{DPA} & \multicolumn{2}{c}{DDPA+} & \multicolumn{2}{c}{BEMA} \vline  \\
\hline
 & Lower & Upper & Lower & Upper & Lower & Upper & Lower & Upper \\
\hline
99 \% CI & 9 & 14 & 24 & 30 & 8 & 15 & 117 & 122 \\
\hline
\end{tabular}
\end{table}

\begin{table}
\caption{99 \% Confidence Interval for the Number of Components of SNP genotyping dataset with different estimate of the number of components.}\label{T.3}
\centering
\begin{tabular}{| l | l | l | l |    }
\hline
 Likelihood-Based & DPA & DDPA+ & BEMA  \\
\hline
 178 & 212 &13 & 120\\
\hline
\end{tabular}
\end{table}

Since the truth about the true number of components is not known about this dataset, we cannot simply validate our $99 \%$ confidence interval estimate of the number of components. Nonetheless, it is prudent to provide some heuristic validation for our
estimated $99 \%$ confidence interval. Motivated by \cite{perryowen2010} and \cite{dobriban2018dpa}, we provide graphical
validation for our estimated $99 \%$ confidence interval. Figures show

\begin{figure}
\centering
\includegraphics[width=6in]{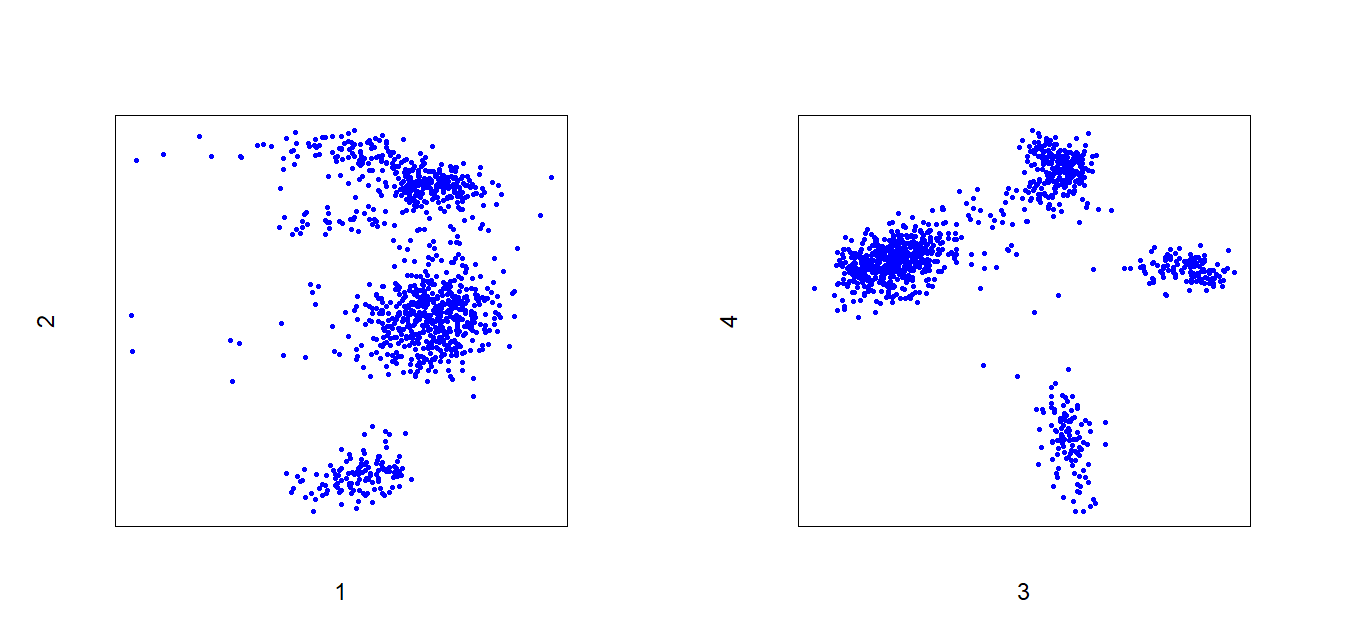}
\includegraphics[width=6in]{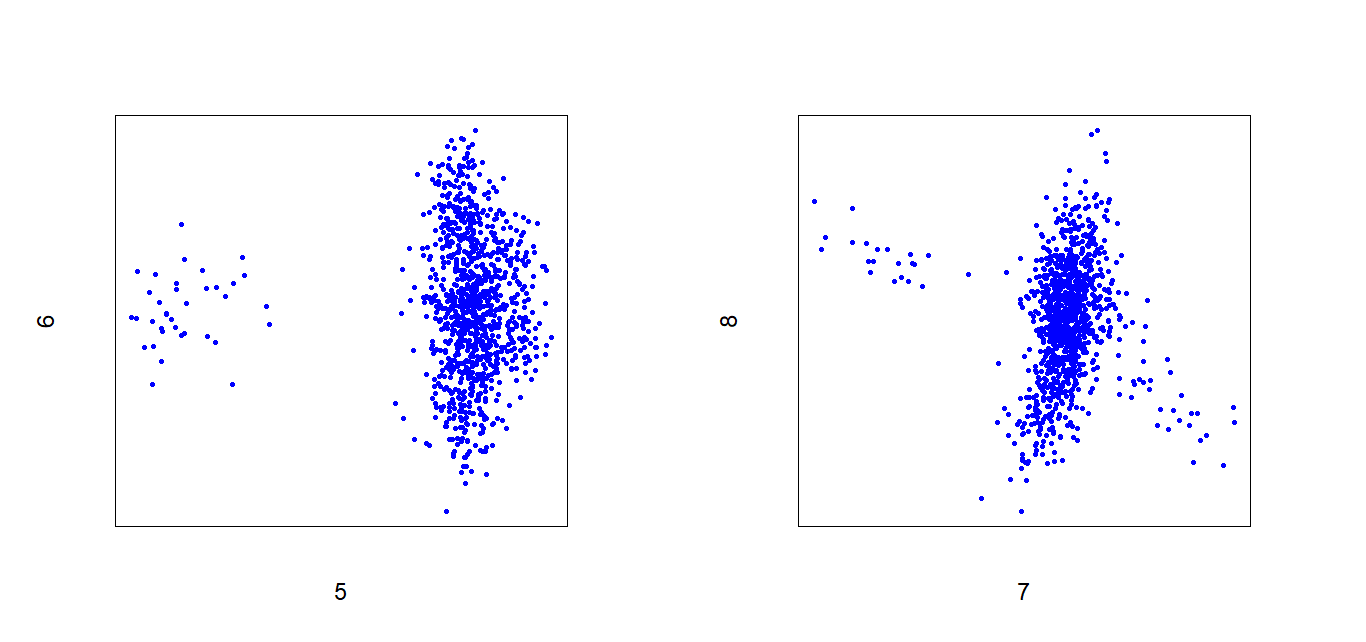}
\caption{Visual representation of the left singular vectors $1-8$ of HGDP data}\label{F:0}
\end{figure}

\begin{figure}
\centering
\includegraphics[width=6in]{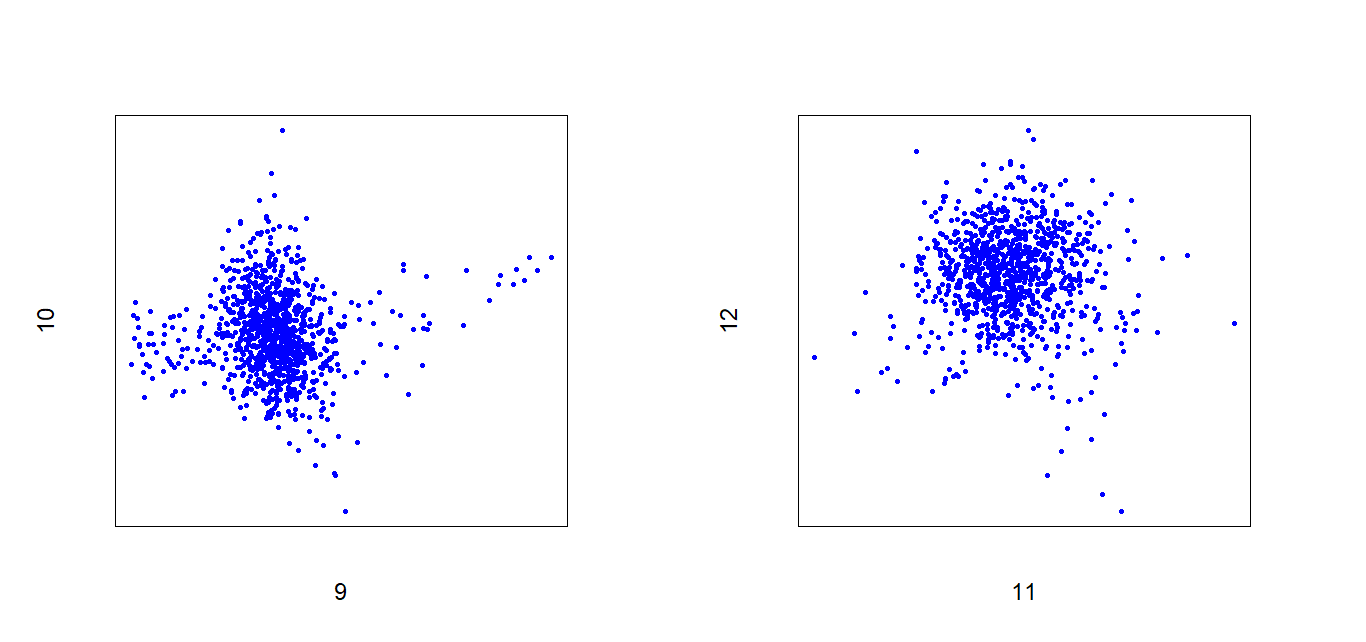}
\includegraphics[width=6in]{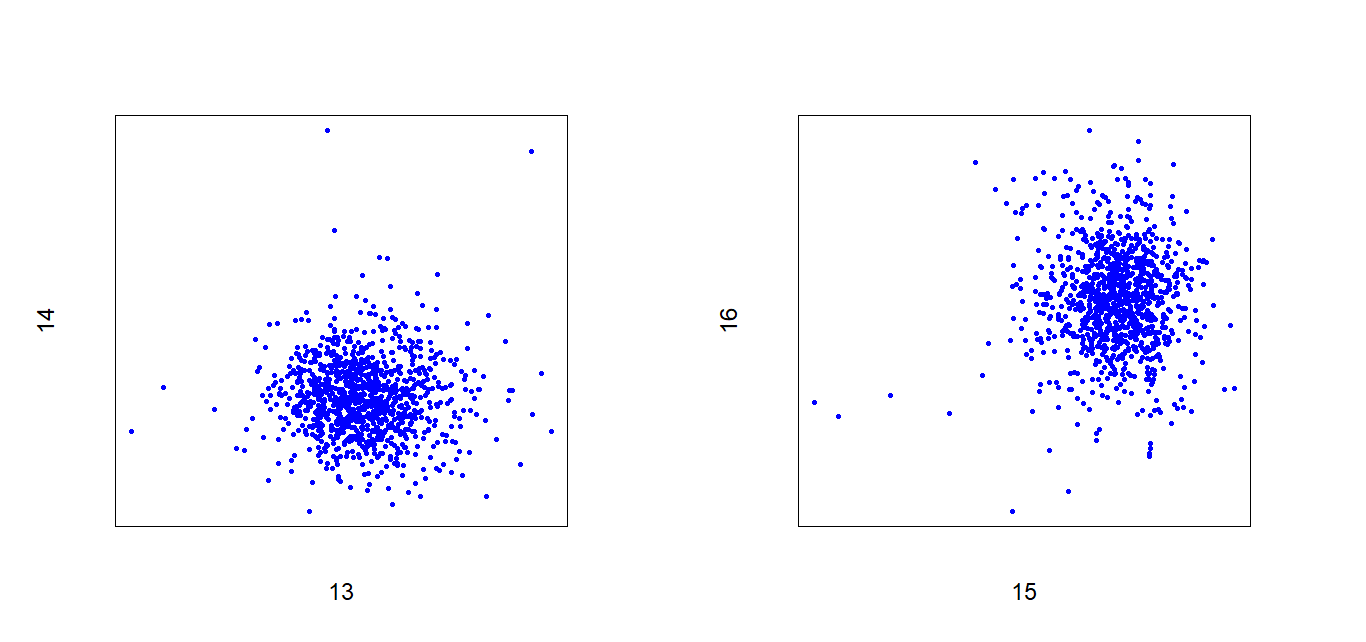}
\caption{Visual representation of the left singular vectors $9-16$ of HGDP data}\label{F:0}
\end{figure}

\section{Discussion}\label{disc}
We proposed a subsampling algorithm for obtaining confidence intervals for the number of components in FA and PCA.
We also showed that the coverage accuracy of the proposed confidence interval is good when the confidence level is high.
The above theoretical result is based on obtaining the \emph{Edgeworth} expansion of the spiked eigenvalues of the
sample covariance matrix. Although Yang and Johnstone (2018) \cite{johnstone2018pca} obtained the \emph{Edgeworth}
expansion for the spiked eigenvalues of the sample covariance matrix under benign conditions for PCA but our assumptions
in the case of FA is comparatively demanding. Therefore, we believe that it may be possition to improve upon these conditions.

\bibliography{reference}

\section{Acknowledgments}
We are grateful to Dr. Edgar Dobriban at the Department of Statistics and Data Science at the University of Pennsylvania for
introducing us to the above problem. We are also thankful to Dr. Edgar Dobriban and Dr. Fan Yang (also at
the Depatment of Statistics and Data Science at the University of Pennsylvania) for rich discussion on this topic.

\end{document}